\documentclass[draftclsnofoot]{IEEEtran}
 \onecolumn
\usepackage{latexsym}
\usepackage{cite}
\usepackage{color}
\usepackage{bm}
\usepackage{amsmath, amssymb}
\usepackage[dvips]{graphics}
\usepackage{graphicx}
\usepackage{psfrag}
 \allowdisplaybreaks

\newtheorem{definition}{Definition}

\newtheorem{theorem}{Theorem}
\newtheorem{lemma}[theorem]{Lemma}

\newtheorem{example}{Example}

\begin{document}

\newcommand{\be}{\begin{equation}}
\newcommand{\ee}{\end{equation}}
\newcommand{\bea}{\begin{eqnarray}}
\newcommand{\eea}{\end{eqnarray}}
\newcommand{\beaa}{\begin{eqnarray*}}
\newcommand{\eeaa}{\end{eqnarray*}}

\title{Multiple Access Channel with Partial and Controlled Cribbing Encoders}

\author{Haim Permuter and  Himanshu Asnani \\
\thanks{H. Permuter is with the department of Electrical and Computer Engineering,
 Ben-Gurion University of the Negev, Beer-Sheva, Israel (haimp@bgu.ac.il).
H. Asnani is with the  Department of Electrical Engineering, Stanford University, CA, USA
(asnani@stanford.edu). }
}

 \maketitle \vspace{-1.4cm}

\begin{abstract}
In this paper we consider a multiple access channel (MAC) with partial cribbing encoders. This
means that each of two encoders obtains a deterministic function of the other encoder output with
or without delay. The partial cribbing scheme is especially motivated by the additive noise
Gaussian MAC since perfect cribbing  results in the degenerated case of full cooperation between
the encoders and requires an infinite entropy link. We derive a single letter characterization of
the capacity of the MAC with partial cribbing for the cases of causal and strictly causal
partial cribbing. Several numerical examples, such as quantized cribbing, are presented. We
further consider and derive the capacity region where the cribbing depends on actions that are
functions of the previous cribbed observations. In particular, we consider a scenario where the
action is ``to crib or not to crib" and show that a naive time-sharing strategy is not optimal.
\end{abstract}
\begin{keywords}
Backward decoding, Block-Markov  coding, Cribbing encoders, Cribbing with actions, Gaussian MAC,
Quantized cribbing,
 Partial cribbing, Rate splitting,  Superposition codes, ``To crib or not to
crib" .
\end{keywords}

\vspace{-0.0cm}
\section{Introduction}

In his remarkable dissertation \cite{willems82_dissertation}, Willems  introduced a new problem
of the multiple access channel (MAC) with cribbing encoders and derived its capacity region using
a novel decoding technique called ``backward decoding". Cribbing encoder refers to the case where
the encoder knows perfectly the other output encoder, possibly with delay or lookahead.
\begin{figure}[h!]{
\psfrag{B}[][][1]{Encoder 1} \psfrag{D}[][][1]{Encoder 2} \psfrag{m1}[][][1]{$m_1 \ \ \ \ $}
\psfrag{m2}[][][1]{$\in\{1,...,2^{nR_1}\}\ \ \ \ \ \ \ $} \psfrag{m3}[][][1]{$m_2 \ \ \ \ $}
\psfrag{m4}[][][1]{$\in\{1,...,2^{nR_2}\}\ \ \ \ \ \ \ $} \psfrag{P}[][][1]{$\ \ P_{Y|X_1,X_2}$}
\psfrag{x1}[][][1]{$\ \ \ \ \ \ \ \ \ \ \ \ X_{1,i}(m_1,Z_2^{i-1})$} \psfrag{x2}[][][1]{\; \; \;
\; \; $X_{2,i}(m_2,Z_1^{i})$} \psfrag{M}[][][1]{MAC} \psfrag{s}[][][1]{$S$}
\psfrag{Yi}[][][1]{$Y$} \psfrag{W}[][][1]{Decoder} \psfrag{t}[][][1]{$$}
\psfrag{G1}[][][0.9]{$Z_{1,i}=g_1(X_{1,i})$} \psfrag{G2}[][][0.9]{$Z_{2,i}=g_2(X_{2,i})$}

\psfrag{d}[][][1]{\ delay} \psfrag{a}[][][1]{a} \psfrag{b}[][][1]{b} \psfrag{c12}[][][1]{$
m_{12}\in$} \psfrag{c12b}[][][1]{$ \ \ \ \ \ \ \{1,...,2^{nC_{12}}\}$}

\psfrag{Y}[][][1]{$\ \ \ \ \ \ \hat m_1(Y^n)$} \psfrag{Y2}[][][1]{$\ \ \ \ \hat m_2(Y^n)$}

\centerline{\includegraphics[width=14cm]{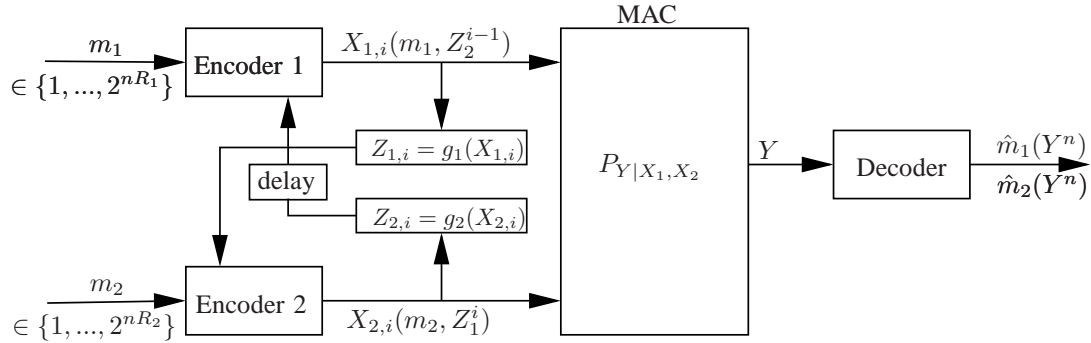}}

\caption{Partial (deterministic-function) cribbing. Each encoder observes a deterministic
function of the other encoder with or without delay. Encoder 1 observes the cribbing in a
strictly causal way, i.e., with delay,  and Encoder 2 observes the cribbing causally, i.e.,
without delay. The setting corresponds to Case B in this paper. }\label{f_mac_crib2}
 }\end{figure}
The work by Willems on MACs with cribbing encoders has been extended to the interference channel
\cite{BrossSteinberg10intereferenceCribbing}, and to state-dependent MAC
\cite{BrossLapidoth10MAC_cribbing_with_state}.
   However, for the Gaussian case, where the
encoder output is of a continuous alphabet, the cribbing idea is not an interesting case
\cite{Willems06cribbingRevisited} since it implies a full cooperation between the encoders
regardless of the delay of the cribbing. This is due to the fact that in a single epoch time a
noiseless continuous signal may transmit an infinite amount of information. Motivated by this
fact, we introduce in this paper {``\it partial cribbing"}, where one encoder only knows a
quantized version, or, more generally, a deterministic function of the coded output of other
encoder.

 In this paper we consider two kinds of partial cribbing: causal and strictly-causal. Causal
 partial cribbing means that at time $i$ the encoder observes (and uses) the partial cribbing signal without delay, i.e., $Z_i$.
 Strictly-causal
 partial cribbing means that at time $i$ the encoder observes the partial cribbing with a delay,
 i.e., $Z_{i-1}$. We derive the capacity region for two different cases according to the causality or the
strictly causality of the cribbing
\begin{itemize}
\item[]Case A: The cribbing for both encoders is {\it strictly-causal}.
\item[]Case B: The cribbing for one encoder is {\it causal} and for the other encoder is {\it strictly-causal}.
\end{itemize}
The setting that is depicted in Fig. \ref{f_mac_crib2} is the case where one encoder has causal
partial cribbing and the other strictly causal partial cribbing, namely Case B. To some extent,
the partial cribbing problem is related to the semi-deterministic relay channel
\cite{ElGamalrefA82SemiDetereministicRelay} which was solved using the partial decode and forward
technique \cite{CoverGamal79Relay}. The partial cribbing setting has a similar structure to the
semi-deterministic relay channel where Encoder 2 plays the role of relay and receives a
deterministic function of the output of Encoder 1. However, the MAC with partial cribbing  is
different from the semi-deterministic relay in the sense that Encoder 2 has its own message to
transmit in addition to its role of relaying information from Encoder 1. Another related problem
is the semi-deterministic broadcast channel\cite{GelfandPinsker80SemiDeterministicBroadcast},
where one of the receivers obtains a deterministic function of the input channel. In our problem
Encoder 1 ``is broadcasting" to Encoder 2 and to the decoder hence this part of the communication
resembles the semi-deterministic broadcast channel. However, in our problem of partial cribbing
only the decoder is actually required to decode the message error-free.

The coding scheme presented here for the partial cribbing uses the same techniques that were used
for the perfect cribbing, i.e., block Markov coding, Shannon's strategies, super position coding,
and backward decoding, and in addition to that, we use rate splitting in the code design. Rate
splitting is needed since Encoder 2 can decode only part of the message transmitted by Encoder 1.

Recently, several problems on ``action" in information theory have been considered in
\cite{Weissman10_action_dependet_state_channel,WeissmanPermuter09_VendingMachine,AsnaniPermuterWeissman10_probing_capacity,AsnaniPermuterWeissman10_to_feed_or_not}.
In these problems the side information is not freely available but depends on an action that has
a cost. The solution of partial cribbing allows us to consider the case where the cribbing is
action dependent. Namely, there is an action that is a function  of the previously cribbed
observations and this action  determines the current cribbing function. This kind of questions
may be raised in cognitive communication systems where sensing other users� signals is a resource
with a cost. In particular we show through a simple example where the action is ``to crib or not
to crib" that a time-sharing action is not necessarily optimal.

The remainder of the paper is organized as follows: In Section \ref{s_problem_def_main_res}, we
introduce the setting of MAC with partial cribbing and state the capacity region for
strictly-causal (Case A), as well as mixed causal and strictly-causal (Case B). In sections
\ref{s_converse} and \ref{s_achievability} respectively we provide the converse and achievability
proofs of the capacity region for each case of partial cribbing. In Section
\ref{s_commom_message} we consider the case where a common message, known to the encoders, needs
to be transmitted to the decoder in addition to the private messages. We show that no additional
auxiliary random variable is needed to characterize the capacity region since the partial
cribbing is utilized via generating a common message between the users. In Section
\ref{s_semi_deterministic} we consider the case where one of the encoders has no message to send;
hence it becomes a special case of the semi-deterministic relay channel with and without delay.
We show that indeed the region obtained via partial cribbing and the region obtained via a
semi-deterministic relay channel coincide. In Section \ref{s_Gaussian_mac} we consider a Gaussian
MAC with quantized cribbing. We provide a simple achievable scheme and show numerically that even
with a few bit quantizer we obtain an achievable region that is very close to the perfect
cribbing capacity region. In Section \ref{s_controlled_cribbing} we consider a scenario where a
limited-resource action controls the cribbing. In particular, we investigate an example where the
action is ``to crib or not to crib" and solve it analytically. In Section \ref{s_conclusion} we
conclude the paper and suggest some research directions that have not been yet solved such as
noncausal partial cribbing, noisy cribbing and a few action related problems.

\section{Problem definition and Main Result \label{s_problem_def_main_res}}

The MAC setting consists of two transmitters (encoders) and one
receiver (decoder). Each transmitter $l\in\{1,2\}$ chooses an index $m_l$
uniformly from the set $\{1,...,2^{nR_l}\}$ and independently of the
other transmitter. The input to the channel from encoder $l\in\{1,2\}$ is
denoted by $\{X_{l,1},X_{l,2},X_{l,3},...\}$. Encoder 1 and Encoder 2 obtain a deterministic function of the output of Encoder 2 and 1, respectively, of the form  $Z_{2,i}=g_2(X_{2,i}),$ and  $Z_{1,i}=g_1(X_{1,i}).$ The output of the
channel is denoted by $\{Y_1,Y_2,Y_3,...\}$.  The channel is characterized by a conditional
probability $P(y_i|x_{1,i},x_{2,i})$. The channel probability does not depend on the time
index $i$ and is memoryless, i.e.,
\begin{equation}\label{e_memoryless_def}
P(y_i|x_{1}^{i},x_{2}^{i},y^{i-1})=P(y_i|x_{1,i},x_{2,i}),
\end{equation}
where the superscripts denote sequences in the following way:
$x_l^i=(x_{l,1},x_{l,2},...,x_{l,i}), \; l\in\{1,2\}$.  Since the settings in this paper do not
include feedback from the receiver to the transmitters, i.e.,
$P(x_{1,i},x_{2,i}|x_{1}^{i-1},x_{2}^{i-1},y^{i-1})=P(x_{1,i},x_{2,i}|x_{1}^{i-1},x_{2}^{i-1})$,
Equation (\ref{e_memoryless_def}) implies that
\begin{equation}\label{e_memoryless_def2}
P(y_i|x_{1}^{n},x_{2}^{n},y^{i-1})=P(y_i|x_{1,i},x_{2,i}).
\end{equation}

\begin{definition}\label{def_crib}
A $(2^{nR_1},2^{nR_2},n)$ {\it code} with partial cribbing
, as shown in Fig. \ref{f_mac_crib2},
consists at time $i$ of an encoding function at Encoder 1
\begin{eqnarray}
&\text{Case A, B,}&f_{1,i}:\{1,...,2^{nR_1}\}\times  \mathcal Z_2^{i-1} \mapsto \mathcal \mathcal
X_{1,i},
\end{eqnarray}
and an encoding function at Encoder 2 that changes according to the following case settings
\begin{eqnarray}
&\text{Case A}& f_{2,i}:\{1,...,2^{nR_2}\}\times  \mathcal Z_1^{i-1} \mapsto \mathcal
\mathcal X_{1,i},\nonumber \\
&\text{Case B}& f_{2,i}:\{1,...,2^{nR_2}\}\times  \mathcal Z_1^{i} \mapsto \mathcal \mathcal
X_{1,i},
\end{eqnarray}
and a decoding function,
\begin{equation}
g:\mathcal Y^n \mapsto \{1,...,2^{nR_1}\} \times
\{1,...,2^{nR_2}\}.
\end{equation}
\end{definition}

The {\it average probability of error} for
$(2^{nR_1},2^{nR_2},n)$ code is defined as
\begin{equation}
P_e^{(n)}=\frac{1}{2^{n(R_1+R_2)}} \sum_{m_1,m_2} \Pr\{g(Y^n)\neq(m_1,m_2)|(m_1,m_2) \text{
sent}\}.
\end{equation}
A rate $(R_1,R_2)$ is said to be {\it achievable} for the encoder with partial cribbing if there
exists a sequence of $(2^{nR_1},2^{nR_2},n)$ codes with $P_e^{(n)}\to 0$. The {\it capacity
region} of the MAC is the closure of all achievable rates. The following theorem describes the
capacity region of MAC with partial cribbing for two different cases of causality.

Let us define the following regions $\mathcal R_{A},\mathcal R_{B}$, which are contained in
$\mathbb{R}^2_{+}$, namely, contained in  the set of nonnegative two dimensional real numbers.
\begin{equation}\label{e_region2_case_a}
\mathcal R_A=
\left\{\begin{array}{l}R_1\leq H(Z_1|U)+I(X_1;Y|X_2,Z_1,U),\\
R_2\leq H(Z_2|U)+I(X_2;Y|X_1,Z_2,U),\\
R_1+R_2\leq I(X_1,X_2;Y|U,Z_1,Z_2)+H(Z_1,Z_2|U),\\
R_1+R_2\leq I(X_1,X_2;Y), \text{ for }\\
 P(u)P(x_1,z_1|u)P(x_2,z_2|u)P(y|x_1,x_2). \end{array} \right\}
\end{equation}
The region $\mathcal R_B$ is defined with the same set of inequalities as in
(\ref{e_region2_case_a}), but the joint distribution is of the form    \begin{equation}
\label{e_region2_case_b} P(u)P(x_1,z_1|u)P(x_2,z_2|z_1,u)P(y|x_1,x_2).
\end{equation}


\begin{theorem}[Capacity region]\label{t_mac_one_crib}
The capacity regions of the MAC with strictly-causal (Case A), mixed causal
and strictly-causal (Case B) partial cribbing as described in Def. \ref{def_crib} are $\mathcal R_A$, $\mathcal R_B$,  respectively. 
\end{theorem}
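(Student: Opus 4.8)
I would prove both inclusions — achievability and converse — handling Cases~A and B in parallel, since they differ only in one conditional‑independence relation in the single‑letter distribution.

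\textbf{Converse.} Starting from a sequence of codes with $P_e^{(n)}\to 0$, I would invoke Fano's inequality, $H(m_1,m_2\mid Y^n)\le n\epsilon_n$, and the independence $m_1\perp m_2$. The auxiliary is $U_i=(Z_1^{i-1},Z_2^{i-1})$. The structural fact I expect to be the main subtlety is that $m_1\perp m_2\mid (Z_1^{i-1},Z_2^{i-1})$ for every $i$; this follows by induction on $i$, using that $Z_{1,i}=g_1(f_{1,i}(m_1,Z_2^{i-1}))$ is, once we condition on $(Z_1^{i-1},Z_2^{i-1})$, a function of $m_1$ alone (and, after further conditioning on $Z_{1,i}$, the variable $Z_{2,i}$ is a function of $m_2$ alone in both cases), so that conditioning on $(Z_{1,i},Z_{2,i})$ merely restricts $m_1$ and $m_2$ to independent events. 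Consequently $X_{1,i}$ is a function of $(m_1,U_i)$ and $X_{2,i}$ of $(m_2,U_i)$ in Case~A (of $(m_2,U_i,Z_{1,i})$ in Case~B), so the single‑letter distribution automatically factors as in (\ref{e_region2_case_a}), resp.\ (\ref{e_region2_case_b}).

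For the individual bound I would write $nR_1=H(m_1\mid m_2)\le I(m_1;Y^n\mid m_2)+n\epsilon_n\le I(m_1;Z_1^n\mid m_2)+I(m_1;Y^n\mid Z_1^n,m_2)+n\epsilon_n$. Since everything is a deterministic function of $(m_1,m_2)$, the first term equals $H(Z_1^n\mid m_2)=\sum_i H(Z_{1,i}\mid Z_1^{i-1},m_2)\le\sum_i H(Z_{1,i}\mid U_i)$ (here $Z_2^{i-1}$ is a function of $(m_2,Z_1^{i-1})$, so adding it to the conditioning is free). For the second term I would adjoin $X_2^n$ (a function of $(m_2,Z_1^n)$) at no cost, use memorylessness to pin down $H(Y_i\mid X_{1,i},X_{2,i})$ irrespective of extra conditioning, and upper‑bound $H(Y_i\mid\cdot)$ by retaining only $(X_{2,i},Z_{1,i},U_i)$, obtaining $\sum_i I(X_{1,i};Y_i\mid X_{2,i},Z_{1,i},U_i)$. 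The $R_2$ bound is symmetric; the two sum‑rate bounds follow the same template from $H(m_1,m_2)\le I(m_1,m_2;Y^n)+n\epsilon_n$, the last one reducing to the ordinary memoryless‑MAC bound $\sum_i I(X_{1,i},X_{2,i};Y_i)$. A standard time‑sharing step (absorbing $Q\sim\mathrm{Unif}\{1,\dots,n\}$ into $U$) completes the converse.

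\textbf{Achievability.} I would use block‑Markov coding over $B$ blocks with rate splitting $R_1=S_1+T_1$, $R_2=S_2+T_2$, the ``$S$''‑parts being exactly what each encoder can relay to the other through the cribbed signal. Fixing a distribution from (\ref{e_region2_case_a})/(\ref{e_region2_case_b}), I would generate $u^n(s_1',s_2')\sim\prod P(u)$ for each value $(s_1',s_2')$ of the previous block's $S$‑messages; on top of each $u^n$ generate $z_1^n(s_1\mid u^n)\sim\prod P(z_1\mid u)$, and on top of \emph{that} $x_1^n(s_1,t_1\mid u^n)\sim\prod P(x_1\mid z_{1,i},u_i)$, so that all codewords with a given $s_1$ share the same cribbed word $z_1^n$; symmetrically for Encoder~2, with the refinement that in Case~B the $z_2^n$‑codewords are drawn conditioned also on the cribbed $z_1^n$ (i.e.\ $\sim\prod P(z_2\mid z_1,u)$), which is legitimate because that cribbing is causal. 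In block $b$ both encoders use $u^n(s_1[b-1],s_2[b-1])$, each knowing the other's previous $S$‑message because it decoded it from the noiseless cribbing; this decoding is reliable whenever $S_1<H(Z_1\mid U)$ and $S_2<H(Z_2\mid U)$ (resp.\ $H(Z_2\mid Z_1,U)$ in Case~B), since two i.i.d.\ $z$‑codewords collide with probability $\approx 2^{-nH(\cdot)}$. Backward decoding at the receiver, block by block, jointly recovers $(s_1[b-1],s_2[b-1],t_1[b],t_2[b])$ from $y^n[b]$ with $(s_1[b],s_2[b])$ already known; the usual joint‑typicality analysis — noting that an error confined to a $t$‑part leaves $z_1^n$ (resp.\ $z_2^n$) correct — yields $T_1<I(X_1;Y\mid X_2,Z_1,U)$, $T_2<I(X_2;Y\mid X_1,Z_2,U)$, $T_1+T_2<I(X_1,X_2;Y\mid Z_1,Z_2,U)$, while an error in an $S$‑part makes $u^n,x_1^n,x_2^n$ all fresh and gives $R_1+R_2<I(X_1,X_2;Y)$. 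Summing $S_l+T_l$ reproduces the individual bounds, and summing all four while using $H(Z_1,Z_2\mid U)=H(Z_1\mid U)+H(Z_2\mid U)$ — valid in Case~A because $Z_1\perp Z_2\mid U$, and with $S_1+S_2$ reaching $H(Z_1,Z_2\mid U)=H(Z_1\mid U)+H(Z_2\mid Z_1,U)$ in Case~B — reproduces the sum‑rate bound. The point I expect to require the most care is confirming that the Case~B scheme exhausts all of $\mathcal R_B$ rather than a strict subregion; I would address the residual slack in the $R_2$ bound by taking the union over all admissible input distributions, noting that the corner points where that bound is active are realized by distributions in which $Z_2\perp Z_1\mid U$, for which $H(Z_2\mid Z_1,U)=H(Z_2\mid U)$.
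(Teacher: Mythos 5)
Your proposal follows essentially the same route as the paper: the converse uses Fano's inequality with the auxiliary $U_i=(Z_1^{i-1},Z_2^{i-1})$, the decompositions $H(M_1\mid M_2)=H(Z_1^n\mid M_2)+H(M_1\mid Z_1^n,M_2)$ and $H(M_1,M_2)=H(Z_1^n,Z_2^n)+H(M_1,M_2\mid Z_1^n,Z_2^n)$, and the conditional independence $M_1-(Z_1^{i-1},Z_2^{i-1})-M_2$ to justify the single-letter factorization; the achievability uses rate splitting, block-Markov superposition coding, cribbing-based decoding of the split parts at rates $H(Z_1\mid U)$, $H(Z_2\mid U)$, backward decoding, and Fourier--Motzkin elimination, with Shannon strategies (code trees) for Encoder~2 in Case~B. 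This all matches the paper's proof.

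The one substantive point is your closing remark on Case~B. The concern you raise is genuine: when the $z_2^n$ code trees are generated according to $P(z_2\mid z_1,u)$, Encoder~1 (which knows $z_1^n$ and $u^n$) can only resolve about $2^{nH(Z_2\mid Z_1,U)}$ messages from $z_2^n$, so the natural scheme yields $R_2\le H(Z_2\mid Z_1,U)+I(X_2;Y\mid X_1,Z_2,U)$ for a fixed input distribution, not the $H(Z_2\mid U)$ appearing in $\mathcal R_B$. However, your proposed repair --- that the corner points where the $R_2$ bound is active are always attained by distributions with $Z_2\perp Z_1\mid U$ --- is asserted without proof, and it is not evident: since $Z_2=g_2(X_2)$, making $X_2$ depend on $Z_1$ given $U$ (which is precisely the benefit of causal cribbing in Case~B) generically forces $Z_2$ to depend on $Z_1$ given $U$ as well, so one cannot freely decouple them. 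To close this you would need either to show that the union over all admissible distributions of the two candidate regions coincides, or to tighten the converse (note that $H(Z_{2,i}\mid Z_2^{i-1},M_1)=H(Z_{2,i}\mid Z_2^{i-1},M_1,Z_1^{i})\le H(Z_{2,i}\mid U_i,Z_{1,i})$ in Case~B, since $Z_1^i$ is determined by $(M_1,Z_2^{i-1})$) so that both bounds carry $H(Z_2\mid Z_1,U)$ and match. The paper's own one-sentence treatment of Case~B achievability does not address this point either, so your proposal is no less complete than the source, but as written this step is a gap.
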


\begin{lemma}\label{lemma_cardinality}
To exhaust ${\mathcal R_A}$ and ${\mathcal R_B}$  it is enough to restrict the alphabet of $U$,
 to satisfy
\begin{eqnarray}
|{\mathcal U}|&\leq&\min(|{\mathcal Y}|+3, |{\mathcal X_1}||{\mathcal X_2}|+2)).
\end{eqnarray}
\end{lemma}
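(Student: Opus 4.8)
\emph{Proof plan.} The plan is to invoke the Fenchel--Eggleston--Carath\'eodory strengthening of the support lemma, taking care that the replacement of $U$ preserves the factorizations in (\ref{e_region2_case_a}) and (\ref{e_region2_case_b}). The first step is to isolate how the four bounds depend on $U$. Since $Z_1=g_1(X_1)$ and $Z_2=g_2(X_2)$ are deterministic and $Y-(X_1,X_2)-(U,Z_1,Z_2)$ is a Markov chain, each of the three ``entropy-plus-mutual-information'' bounds can be written as $\sum_u p(u)\,f_j(P_{X_1,X_2|U=u})-H(Y|X_1,X_2)$, $j=1,2,3$, with $f_1(q)=H(Z_1)+H(Y|X_2,Z_1)$, $f_2(q)=H(Z_2)+H(Y|X_1,Z_2)$ and $f_3(q)=H(Z_1,Z_2)+H(Y|Z_1,Z_2)$ evaluated under a joint pmf $q$ on $\mathcal X_1\times\mathcal X_2$ (with $z_1=g_1(x_1)$, $z_2=g_2(x_2)$, and $y$ drawn through the fixed channel); these $f_j$ are continuous in $q$. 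The remaining bound is $I(X_1,X_2;Y)=H(Y)-H(Y|X_1,X_2)$, a function of the marginal $P_{X_1,X_2}$ alone. Hence the polytope in (\ref{e_region2_case_a}) (resp. (\ref{e_region2_case_b})) is unchanged if $U$ is replaced by any $U'$ obeying the same factorization for which $\sum_u p(u)f_j(P_{X_1,X_2|U=u})$, $j=1,2,3$, and the scalars $H(Y)$, $H(Y|X_1,X_2)$ are preserved; preserving the full marginal $P_{X_1,X_2}$ is more than enough for the last two.

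To keep the factorization intact under the support lemma I would parametrize the law given $U=u$ not by the free pmf $P_{X_1,X_2|U=u}$ but by its ingredients: the pair $(P_{X_1|U=u},P_{X_2|U=u})$ in Case A, and $(P_{X_1|U=u},\{P_{X_2|Z_1=z_1,U=u}\}_{z_1\in\mathcal Z_1})$ in Case B. These range over a compact, connected (indeed convex) set $\mathcal K$, and the joint $P_{X_1,X_2|U=u}$ --- hence every $f_j$, the induced $P_Y$, and each coordinate of $P_{X_1,X_2}$ --- is continuous in the point of $\mathcal K$. Applying Fenchel--Eggleston--Carath\'eodory on $\mathcal K$ to the $|\mathcal X_1||\mathcal X_2|-1$ functionals that pin down $P_{X_1,X_2}$ together with $f_1,f_2,f_3$ produces a $U'$ supported on at most $|\mathcal X_1||\mathcal X_2|+2$ points; applying it instead to the $|\mathcal Y|-1$ functionals that pin down $P_Y$, the three functionals $f_1,f_2,f_3$, and the single functional $q\mapsto H(Y|X_1,X_2)$ produces a $U'$ supported on at most $|\mathcal Y|+3$ points. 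In either case the new conditional laws are re-synthesized from points of $\mathcal K$, so the prescribed factorization holds automatically, and by the previous paragraph the region is exactly reproduced; taking the smaller cardinality gives the lemma.

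The one point that needs care --- and the reason for running the support lemma over $\mathcal K$ rather than over the simplex of joint laws $P_{X_1,X_2|U=u}$ --- is that convex combinations of product (resp. Markov-structured) conditional laws need not be product (resp. Markov-structured), so letting $P_{X_1,X_2|U=u}$ vary freely would destroy the conditional independence of (\ref{e_region2_case_a}) or the Markov chain $X_1-(Z_1,U)-X_2$ of (\ref{e_region2_case_b}); parametrizing by ingredients and re-synthesizing avoids this. A minor related subtlety is that $H(Y)$ must be controlled by carrying the full vector $P_Y=\sum_u p(u)P_{Y|U=u}$ (or, in the other option, the full $P_{X_1,X_2}$) through the support lemma, since $H(Y)$ depends on that \emph{average} distribution and is not itself an average of a functional of $P_{X_1,X_2|U=u}$. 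The remaining items --- continuity of the $f_j$ and of $q\mapsto H(Y|X_1,X_2)$, and compactness and connectedness of $\mathcal K$ --- are routine.
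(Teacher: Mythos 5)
Your proof is correct and follows essentially the same route as the paper's: the support lemma applied either to the $|\mathcal Y|-1$ functionals fixing $P_Y$ together with $f_1,f_2,f_3$ and $H(Y|X_1,X_2)$ (giving $|\mathcal Y|+3$), or to the $|\mathcal X_1||\mathcal X_2|-1$ functionals fixing $P_{X_1,X_2}$ together with $f_1,f_2,f_3$ (giving $|\mathcal X_1||\mathcal X_2|+2$). Your extra care in running the argument over the parametrization by the conditional ``ingredients'' so that the factorization $P(x_1,z_1|u)P(x_2,z_2|u)$ (resp.\ the Markov structure of Case B) survives the reduction is a genuine point that the paper's one-paragraph proof leaves implicit.
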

The proof of Theorem \ref{t_mac_one_crib} and Lemma \ref{lemma_cardinality} is given in the next
section.
\section{Converse \label{s_converse}}
Here we provide the  converse proof of Theorem  \ref{t_mac_one_crib} for the two cases, A and B.

{\it Converse proof of Case A:}
 Assume that we have a
$(2^{nR_1},2^{nR_2},n)$ code as in  Definition
\ref{def_crib}, Case A. We will show the existence of a joint
distribution $ P(u)P(z_1|u)P(z_2|u)P(x_1|z_1,u)P(x_2|z_2,u)P(y|x_1,x_2)$ that
satisfies the inequalities of (\ref{e_region2_case_a}) within some $\epsilon_n$,
where  $\epsilon_n$ goes to zero as $n\to\infty$. Consider
\begin{eqnarray}\label{e_m1m2_con}
n(R_1+R_2)&=& H(M_{1},M_2)\nonumber \\
&=& H(M_{1},M_2)+H(M_1,M_2|Y^n)-H(M_1,M_2|Y^n)\nonumber \\
&\stackrel{(a)}{=}& I(M_1,M_2;Y^n)+n\epsilon_n \nonumber \\
&\stackrel{(b)}{=}& I(X_1^n,X_2^n;Y^n)+n\epsilon_n \nonumber \\
&\stackrel{}{=}& \sum_{i=1}^n I(X_1^n,X_2^n;Y_i|Y^{i-1})+n\epsilon_n \nonumber \\
&\stackrel{(c)}{\leq}& \sum_{i=1}^n I(X_{1,i},X_{2,i};Y_i)+n\epsilon_n
\end{eqnarray}
where (a) follows from Fano's inequality, (b) from the fact that $(X_1^{n},X_2^n)$ is a deterministic function of $(M_1,M_2)$ and the Markov chain $Y^n-(X_1^n,X_2^n)-(M_1,M_2)$ and (c) from the Markov chain $Y_i-(X_{1,i},X_{2,i})-(X_1^n,X_2^n,Y^{i-1})$.
Now consider,
\begin{eqnarray}\label{e_Hm1m2_con2}
n(R_1+R_2)&=& H(M_{1},M_2)\nonumber \\
&\stackrel{(a)}{=}& H(M_{1},M_2,Z_1^n,Z_2^n)\nonumber \\
&\stackrel{}{=}& H(Z_1^n,Z_2^n)+H(M_{1},M_2|Z_1^n,Z_2^n) \nonumber \\
&\stackrel{(b)}{=}& H(Z_1^n,Z_2^n)+I(M_{1},M_2;Y^n|Z_1^n,Z_2^n)+n\epsilon_n \nonumber \\
&\stackrel{}{=}& H(Z_1^n,Z_2^n)+I(X_{1}^n,X_2^n;Y^n|Z_1^n,Z_2^n)+n\epsilon_n \nonumber \\
&\stackrel{}{=}& \sum_{i=1}^n H(Z_{1,i},Z_{2,i}|Z_1^{i-1},Z_2^{i-1})+I(X_1^n,X_2^n;Y_i|Y^{i-1},Z_1^n,Z_2^n)+n\epsilon_n \nonumber \\
&\stackrel{}{\leq}& \sum_{i=1}^n  H(Z_{1,i},Z_{2,i}|Z_1^{i-1},Z_2^{i-1})+I(X_{1,i},X_{2,n};Y_i|Z_{1}^i,Z_{2}^i)+n\epsilon_n, \nonumber \\
&\stackrel{(c)}{=}& \sum_{i=1}^n H(Z_{1,i},Z_{2,i}|U_i)+I(X_{1,i},X_{2,n};Y_i|Z_{1,i},Z_{2,i},U_i)+n\epsilon_n,
\end{eqnarray}
where (a) follows from the fact that $(Z_1^n,Z_2^n)$ are a deterministic function of $(M_1,M_2)$, (b) from Fano's inequality, and (c) from the following definition of a random variable
\begin{eqnarray}
U_i&\triangleq& (Z_1^{i-1},Z_2^{i-1})\label{e_u_case_a}.
\end{eqnarray}
Furthermore, consider
\begin{eqnarray}\label{e_Hm1_con}
nR_1&\stackrel{}{=}& H(M_{1})\nonumber \\
&\stackrel{(a)}{=}& H(M_{1}|M_2)\nonumber \\
&\stackrel{(b)}{=}& H(M_{1},Z_1^n|M_2)\nonumber \\
&\stackrel{}{=}& H(Z_1^n|M_2)+H(M_{1}|Z_1^n,M_2)\nonumber \\
&\stackrel{}{=}& H(Z_1^n|M_2)+ H(M_1|M_2,Z_1^n)+H(M_1|Y^n,M_2,Z_1^n)-H(M_1|Y^n,M_2,Z_1^n)\nonumber \\
&\stackrel{(c)}{=}& \sum_{i=1}^n H(Z_{1,i}|Z_1^{i-1},M_2)+ I(Y_i;M_1|Y^{i-1},M_2,Z_1^{n})+n\epsilon_n \nonumber \\
&\stackrel{(d)}{=}& \sum_{i=1}^n H(Z_{1,i}|Z_1^{i-1}, Z_2^{i-1},M_2)+I(Y_i;M_1,X_{1,i}|Y^{i-1},M_2,X_{2,i},Z_1^{n},Z_2^{n})+n\epsilon_n \nonumber \\
&\stackrel{(e)}{\leq}& \sum_{i=1}^n H(Z_{1,i}|Z_1^{i-1}, Z_2^{i-1})+I(Y_i;X_{1,i}|X_{2,i},Z_1^{i},Z_2^{i})+n\epsilon_n \nonumber \\
&\stackrel{}{=}& \sum_{i=1}^n H(Z_{1,i}|U_i)+I(Y_i;X_{1,i}|X_{2,i},U_i,Z_{1,i})+n\epsilon_n
\end{eqnarray}
where (a) follows from the fact that the messages $M_1$ and $M_2$ are independent of each other,
(b) follows from the fact that $Z_1^n$ is a deterministic function of $(M_1,M_2)$,  (c) follows
from Fano's inequality, and (d) from the fact that $X_{1,i}$ is a deterministic functions of
$(M_1,Z_2^{i-1})$ and $X_{2,i}$ is a deterministic function of $(M_2,Z_1^{i-1})$. Step (e)
follows from the Markov chain $Y_i-(X_{1,i},X_{2,i},Z^{n})-(M_1,M_2,Y^{i-1})$ and the fact that
conditioning reduces entropy. Similarly to (\ref{e_Hm1_con}) we obtain
\begin{eqnarray}\label{e_Hm2_con}
nR_2&\stackrel{}{\leq}& \sum_{i=1}^n H(Z_{2,i}|U_i)+I(Y_i;X_{2,i}|X_{1,i},U_i,Z_{2,i})+n\epsilon_n.
\end{eqnarray}
Now let us verify that the three  Markov chains $Z_{1,i}-U_i-Z_{2,i}$,
$X_{1,i}-(U_i,Z_{1,i})-(X_{2,i})$, and  $X_{2,i}-(U_i,Z_{2,i})-(X_{1,i})$ hold. The  first Markov
chain is due to the Markov $(M_1,Z_2^{i-1})-(Z_1^{i-1},Z_2^{i-1})-(M_2,Z_1^{i-1})$ or
equivalently $M_1-(Z_1^{i-1},Z_2^{i-1})-M_2$ and the second Markov chain is due to the Markov
chain $(M_1,Z_2^{i-1})-(Z_1^{i},Z_2^{i-1})-(M_2,Z_1^{i-1})$ or equivalently
$M_1-(Z_1^{i},Z_2^{i-1})-M_2$. The Markov chain follows from the joint distribution
$P(m_1,m_2,z_1^n,z_2^n)=P(m_1)P(m_2)\prod_{i=1}^nP(z_{1,i}|z_{2}^{i-1},m_1)\prod_{i=1}^nP(z_{2,i}|z_1^{i-1},m_2)$
and the observation that
\begin{eqnarray}
P(m_1|z_1^n,z_2^n,m_2)&=&\frac{P(m_1)P(m_2)\prod_{i=1}^nP(z_{1,i}|z_{2}^{i-1},m_1)\prod_{i=1}^nP(z_{2,i}|z_1^{i-1},m_2)}{\left(P(m_2)\prod_{i=1}^nP(z_{2,i}|z_1^{i-1},m_2)\right)
\sum_{m_1}P(m_1)\prod_{i=1}^nP(z_{1,i}|z_{2}^{i-1},m_1)}\nonumber\\
&=&\frac{\prod_{i=1}^nP(z_{1,i}|z_{2}^{i-1},m_1)}{
\sum_{m_1}P(m_1)\prod_{i=1}^nP(z_{1,i}|z_{2}^{i-1},m_1),}
\end{eqnarray}
does not depend on $m_2$.
 The third Markov chain is an exchange between the indexes $1$ and
$2$, namely, $M_1,X_{1,i},Z_{1,i}$ is exchanged with $M_2,X_{2,i},Z_{2,i}$, respectively.
Finally, let $Q$ be a random variable independent of $(X_1^n,X_2^n,Y^n)$, and uniformly
distributed over the set $\{1,2,3,..,n\}$. We define the random variables $U\triangleq(Q,U_Q)$
and obtain that the region given in (\ref{e_region2_case_a}) is an outer bound to any achievable
rate.
 \hfill\QED

Once Case A is proved,  Case B follows straightforwardly using the following modification.

{\it Converse proof for Case B:} We repeat the same converse as for Case A, except that in the
final step we need to show the Markov chain $X_{2,i}-(U_i,Z_{1,i},Z_{2,i})-X_{1,i}$ rather than
$X_{2,i}-(U_i,Z_{2,i})-X_{1,i}$ as in Case A. Since for case B the Markov chain
$(M_2,Z_1^i)-(Z_1^i,Z_2^i)-M_1$ holds it follows that
$X_{2,i}-(M_2,Z_1^i)-(U_i,Z_i)-(M_1,Z_2^{i-1})-X_{1,i}$ holds too.\hfill \QED


Now we prove Lemma  \ref{lemma_cardinality} which allows us to bound the cardinality of the
auxiliary random variable $U$ without decreasing the rate regions $\mathcal R_{A},\mathcal
R_{B}$.

{\it Proof of Lemma \ref{lemma_cardinality}:}
 We invoke the support lemma~\cite[pp. 310]{Csiszar81}.
The external random variable $U$ must have $|{\mathcal Y}|-1$ letters to preserve $P(y)$ plus
four more to preserve the expressions $H(Z_1|U)+I(X_1;Y|X_2,Z_1,U)$,
$H(Z_2|U)+I(X_2;Y|X_1,Z_2,U)$, $I(X_1,X_2;Y|U,Z_1,Z_2)+H(Z_1,Z_2|U)$, and $H(Y|X_1,X_2,U,)$.
Alternatively, the external random variable $U$ must have $|{\mathcal X_1}||{\mathcal X_2}|-1$
letters to preserve $P(x_1,x_2)$ and three more to preserve the expressions
$H(Z_1|U)+I(X_1;Y|X_2,Z_1,U)$, $H(Z_2|U)+I(X_2;Y|X_1,Z_2,U)$,
$I(X_1,X_2;Y|U,Z_1,Z_2)+H(Z_1,Z_2|U)$. Hence the cardinality of $U$ may be bounded by
$\min(|{\mathcal Y}|+3, |{\mathcal X_1}||{\mathcal X_2}|+2)).$ \hfill \QED

\section{Achievability proof of Theorem \ref{t_mac_one_crib}\label{s_achievability}}
In this section we provide the  achievability proof of Theorem  \ref{t_mac_one_crib} for the two
cases, A and B. Throughout the achievability proofs in the paper we use the definition of a
strong typical set. The set $T^{(n)}_\epsilon(X,Y,Z)$ of $\epsilon$-typical $n-$sequences is
defined by $\{(x^n,y^n,z^n):\frac{1}{n}N(x,y,z|x^n,y^n,z^n)-p(x,y,z)|\leq \epsilon p(x,y,z)
\forall (x,y,z)\in \mathcal X\times \mathcal Y\times \mathcal Z\}$, where $N(x,y,z|x^n,y^n,z^n)$
is the number of appearances of $(x,y,z)$ in the $n-$sequnce $(x^n,y^n,z^n)$. Additionally, we
will use the following well-known lemma
\cite{Csiszar81,CovThom06,Kramer07BookMultiUser,ElGammalKim10LectureNotes},
\begin{lemma}[Joint typicality lemma]\label{l_typical}
Consider a joint distribution $P_{X,Y,Z}$ and suppose $(x^n,y^n)\in
T_\epsilon^{(n)}(X,Y)$. Let $\tilde Z^n$ be distributed according to
$\prod_{i=1}^n P_{Z|X}(\tilde z_i|x_i)$. Then,
\begin{equation}
\Pr\{(x^n,y^n,\tilde Z^n)\in T^{(n)}_{\epsilon}(X,Y,Z)\}\leq
2^{-n(I(Y;Z|X)-\delta(\epsilon))},
 \end{equation}
 where $\lim_{\epsilon\to 0}\delta(\epsilon)= 0$.
\end{lemma}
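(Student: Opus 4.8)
The plan is to use the standard two-estimate argument for a typicality probability under a mismatched product measure: bound the probability of a single jointly typical $z^n$ under $\prod_i P_{Z|X}(\cdot|x_i)$, bound the number of such $z^n$, and multiply. First I would write
\[
\Pr\{(x^n,y^n,\tilde Z^n)\in T^{(n)}_{\epsilon}(X,Y,Z)\}=\sum_{z^n:\,(x^n,y^n,z^n)\in T^{(n)}_{\epsilon}(X,Y,Z)}\ \prod_{i=1}^n P_{Z|X}(z_i|x_i).
\]
If $(x^n,y^n)\notin T^{(n)}_{\epsilon}(X,Y)$ the index set is empty and the bound is trivial; conversely, whenever the set is nonempty, marginalizing the type over $z$ shows $(x^n,y^n)$ is automatically $\epsilon$-typical, so the hypothesis costs nothing.

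For a fixed $z^n$ in the index set, marginalizing the type over $y$ shows that $(x^n,z^n)$ is $\epsilon$-typical for $(X,Z)$, hence $N(x,z|x^n,z^n)\geq n(1-\epsilon)P_{X,Z}(x,z)$ for every pair $(x,z)$ that actually occurs (pairs of zero probability cannot). Writing $\prod_{i=1}^n P_{Z|X}(z_i|x_i)=2^{\sum_{x,z}N(x,z|x^n,z^n)\log P_{Z|X}(z|x)}$ and using $\log P_{Z|X}(z|x)\leq 0$, which turns the lower bound on $N$ into an upper bound on the exponent, gives
\[
\prod_{i=1}^n P_{Z|X}(z_i|x_i)\ \leq\ 2^{-n(1-\epsilon)H(Z|X)}\ =\ 2^{-n(H(Z|X)-\delta_1(\epsilon))},\qquad \delta_1(\epsilon)=\epsilon H(Z|X).
\]

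Next I would count the index set via the ``probability is at most one'' trick applied to the genuine conditional law. For $z^n$ in the index set, $(x^n,y^n,z^n)$ is $\epsilon$-typical for $(X,Y,Z)$, so $N(x,y,z|x^n,y^n,z^n)\leq n(1+\epsilon)P_{X,Y,Z}(x,y,z)$; writing $\prod_{i=1}^n P_{Z|X,Y}(z_i|x_i,y_i)=2^{\sum_{x,y,z}N(x,y,z|x^n,y^n,z^n)\log P_{Z|X,Y}(z|x,y)}$ and again using nonpositivity of the logarithm yields $\prod_{i=1}^n P_{Z|X,Y}(z_i|x_i,y_i)\geq 2^{-n(H(Z|X,Y)+\delta_2(\epsilon))}$ with $\delta_2(\epsilon)=\epsilon H(Z|X,Y)$. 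Since $\sum_{z^n}\prod_{i=1}^n P_{Z|X,Y}(z_i|x_i,y_i)=1$, restricting the sum to the index set gives $|\{z^n:(x^n,y^n,z^n)\in T^{(n)}_{\epsilon}(X,Y,Z)\}|\leq 2^{n(H(Z|X,Y)+\delta_2(\epsilon))}$.

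Finally I would multiply the two estimates:
\[
\Pr\{(x^n,y^n,\tilde Z^n)\in T^{(n)}_{\epsilon}(X,Y,Z)\}\ \leq\ 2^{n(H(Z|X,Y)+\delta_2(\epsilon))}\cdot 2^{-n(H(Z|X)-\delta_1(\epsilon))}\ =\ 2^{-n(I(Y;Z|X)-\delta(\epsilon))},
\]
using $I(Y;Z|X)=H(Z|X)-H(Z|X,Y)$ and $\delta(\epsilon)=\delta_1(\epsilon)+\delta_2(\epsilon)=\epsilon(H(Z|X)+H(Z|X,Y))\to 0$. The argument is essentially bookkeeping; the only points needing care are the two sign reversals when the type-count bounds $n(1\mp\epsilon)P(\cdot)$ are multiplied by the nonpositive logarithms $\log P_{Z|X}$ and $\log P_{Z|X,Y}$, and the treatment of zero-probability letters, which by the definition of $T^{(n)}_{\epsilon}$ force $N=0$ and hence disappear from every sum. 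I do not foresee a genuine obstacle; for sharper constants one could instead carry the signed deviations $\theta_{x,y,z}$ (with $|\theta_{x,y,z}|\leq\epsilon$) through the single sum against $\log(P_{Z|X}/P_{Z|X,Y})$, the ``likelihood-ratio'' variant of the same computation.
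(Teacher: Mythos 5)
Your proof is correct, and it is the standard two-estimate argument (per-sequence probability $\leq 2^{-n(H(Z|X)-\delta_1)}$ times conditional typical-set size $\leq 2^{n(H(Z|X,Y)+\delta_2)}$) found in the references the paper cites; the paper itself states this lemma as well known and gives no proof, so there is nothing to diverge from. The sign-reversal and zero-probability caveats you flag are exactly the right points of care, and you handle them correctly.
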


For the achievability proof, we use the rate-splitting coding technique in addition to the
techniques used by Willems \cite{Willems85_cribbing_encoders}, i.e.,  block Markov coding,
super-position coding, Shannon's strategies and backward decoding. The rate splitting technique
introduces additional rate variables which are redundant and we eliminate them using the
Fourier$-$Motzkin elimination.

{\it Achievability Proof of Case A:} Let us split rate $R_1$ into two rates $R_1'$ and $R_1''$
such that $R_1=R_1'+R_1''$ and similarly $R_2$ into $R_2'$ and $R_2''$ such that
$R_2=R_2'+R_2''$. Let $m_1'\in [1,...,2^{nR_1'}]$, $m_1''\in [1,...,2^{nR_1''}]$, $m_2'\in
[1,...,2^{nR_2'}]$, and $m_2''\in [1,...,2^{nR_2''}]$. Note that there is a one-to-one mapping
between $(m_1',m_1'')$ and  $m_1$ and between $(m_2',m_2'')$  and  $m_2$.

   {\it Code construction:}
 Divide a block of length $Bn$ into $B$ blocks of length $n$. We use random coding to generate
independently the code for each subblock $b$.  Construct $2^{n(R_1'+R_2')}$ codewords $u^n$
according to i.i.d. $\sim P(u)$. For every codeword $u^n$ construct  $2^{nR_1'}$ codewords
$z_1^n$ according to i.i.d. $\sim P(z_1|u)$ and similarly $2^{nR_2'}$ codewords $z_2^n$ according
to i.i.d. $\sim P(z_2|u)$. Furthermore, generate $2^{nR_1''}$ codewords $x_1^n$ according to
i.i.d. $\sim P(x_1|z_1,u)$ and similarly $2^{nR_2''}$ codewords $x_2^n$ according to i.i.d. $\sim
P(x_2|z_2,u)$ . The Markov structure of the code is
  \begin{eqnarray}
  x_1^n &\text{is determined by}& (m'_{1,b},m''_{1,b}) \text{ conditioned on  } (m'_{1,b-1},m'_{2,b-1})\nonumber \\
  x_2^n &\text{is determined by}& (m'_{2,b},m''_{2,b}) \text{ conditioned on  } (m'_{1,b-1}, m'_{2,b-1}).
  \end{eqnarray}

{\it Encoder:} At block $b\in[1,...,B]$ encode the message $(m'_{1,b-1},m'_{2,b-1})\in
[1,..,2^{n(R'_1+R'_2)}]$ using  $u^n(m'_{1,b-1},m'_{2,b-1})$, encode $m'_{1,b}$ conditioned on
$(m'_{1,b-1},m'_{2,b-1})$ using  $z_1^n(u^n,m'_{1,b})$, and encode $m''_{1,b}$ conditioned on
$(m'_{1,b},m'_{1,b-1},m'_{2,b-1})$ using  $x_1^n(z_1^n,u^n,m''_{1,b})$. Similarly, encode
$m'_{2,b}$ conditioned on $(m'_{1,b-1},m'_{2,b-1})$ using  $z_2^n(u^n,m'_{1,b})$, encode
$m''_{2,b}$ conditioned on $(m'_{2,b},m'_{1,b-1},m'_{2,b-1})$ using
$x_2^n(z_2^n,u^n,m''_{2,b})$. We assume that $m'_{1,0}=m'_{2,0}=1$ and $m'_{1,b}=m'_{2,b}=1$
which allow a backward decoding as explained next.

{\it Decoder:} The receiver waits till the end of the block $Bn$ and starts decoding each message in the sub-blocks going backwards $b\in[B,B-1, B-2,...,1]$. At block $b$, we assume that $(m'_{1,b},m'_{2,b})$ is already known to the receiver from block $b+1$ and it needs to decode  , $m'_{1,b-1}$, $m'_{2,b-1}$, $m''_{2,b}$ and $m''_{2,b}$. The decoder uses joint typicality decoding, hence at block $b$ it looks for   $(\hat m'_{1,b-1},\hat m'_{2,b-1})$, $\hat m''_{2,B}$ and $\hat m''_{2,B}$ for which
\begin{equation}
(u^n(\hat m'_{1,b-1},\hat m'_{2,b-1}), z_1^n(u^n, m'_{1,b}), z_2^n(u^n, m'_{2,b}), x_1^n(z_1^n,u^n,\hat m''_{1,b}),x_2^n(z_2^n,u^n,\hat m''_{2,b})\in T_{\epsilon}^{(n)}(U,Z_1,Z_2,X_1,X_2,Y).
\end{equation}
If no such triplet, or more than one such triplet  is found, an error is declared at block $b$
and therefore at the whole superblock $nB$ (we consider $(\hat m'_{1,b-1},\hat m'_{2,b-1})$  as
one index in $[1,...,2^{nR_1'+nR_2'}]$. The estimated message at block $b$ sent from Encoder 1 is
$(\hat m_{1,a},\hat m_{1,b})$, and the estimated message transmitted from Encoder 2 is $(\hat
m_{2,a},\hat m_{2,b})$.

{\it Error analysis:}
The following lemma will enable us to bound the probability of error of the super-block $nB$ by bounding the probability of error of each block.
\begin{lemma}\label{l_conditional_upper}
Let $\{A_j\}_{j=1}^J$ be a set of events and let $A_j^c$ denotes the complement of the event $A_j$. Then
\begin{equation}
P(\bigcup_{j=1}^J A_j)\leq \sum_{j=1}^nP(A_j|\bigcap_{i=1}^{j-1}A_i^c)=\sum_{j=1}^nP(A_j|A_1^c,A_2^c,...,A_{j-1}^c).
\end{equation}
\end{lemma}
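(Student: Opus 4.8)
The plan is to prove this by the standard \emph{first‑occurrence} decomposition of a union of events. First I would record, for each outcome in $\bigcup_{j=1}^{J}A_j$, the smallest index $j$ for which $A_j$ occurs: set $B_1 = A_1$ and, for $j\ge 2$, $B_j = A_1^c\cap A_2^c\cap\cdots\cap A_{j-1}^c\cap A_j$. Then the sets $B_1,\dots,B_J$ are pairwise disjoint and $\bigcup_{j=1}^{J}B_j = \bigcup_{j=1}^{J}A_j$, so finite additivity gives $P\bigl(\bigcup_{j=1}^{J}A_j\bigr) = \sum_{j=1}^{J}P(B_j)$.

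Next I would bound each term. Writing $C_j = A_1^c\cap\cdots\cap A_{j-1}^c$ (with $C_1$ the whole sample space), we have $B_j = C_j\cap A_j$, hence $P(B_j) = P(A_j\mid C_j)\,P(C_j) \le P(A_j\mid C_j) = P(A_j\mid A_1^c,\dots,A_{j-1}^c)$, using only that $P(C_j)\le 1$. Summing over $j$ then yields $P\bigl(\bigcup_{j=1}^{J}A_j\bigr) \le \sum_{j=1}^{J}P(A_j\mid A_1^c,\dots,A_{j-1}^c)$, which is the claim.

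There is no substantive obstacle here; the only point deserving a word is the degenerate case $P(C_j)=0$, where the conditional probability is not literally defined — but then $P(B_j)=0$ as well, so that term may simply be dropped (or the conditional probability assigned any value in $[0,1]$) without affecting the bound. If one prefers to avoid the disjointification, an equally short route is induction on $J$: the base case $J=1$ is trivial, and for the inductive step one writes $P\bigl(\bigcup_{j=1}^{J}A_j\bigr) = P\bigl(\bigcup_{j=1}^{J-1}A_j\bigr) + P\bigl(A_J\cap A_1^c\cap\cdots\cap A_{J-1}^c\bigr)$, applies the induction hypothesis to the first summand and $P(E\cap F)\le P(E\mid F)$ to the second. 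The lemma is an elementary strengthening of the union bound, and it will be applied with $A_j$ taken to be the decoding‑error event in block $j$ of the backward‑decoding scheme, so that the error probability of the whole superblock of length $nB$ is controlled by the sum of the per‑block error probabilities, each evaluated conditioned on all previously processed blocks having been decoded correctly.
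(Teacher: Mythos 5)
Your proof is correct and follows essentially the same route as the paper's: decompose the union into the disjoint ``first-occurrence'' events $A_j\cap A_1^c\cap\cdots\cap A_{j-1}^c$, apply additivity, and upper-bound each term by dividing by $P(A_1^c\cap\cdots\cap A_{j-1}^c)\le 1$ (the paper writes this out for $J=3$ and notes the general extension). Your remark on the degenerate case $P(A_1^c\cap\cdots\cap A_{j-1}^c)=0$ is a small but worthwhile addition that the paper's argument silently skips.
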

\begin{proof} For simplicity let us assume that $J=3$. In a straightforward manner the proof extends to any number of sets $J$.
 For any three sets of events $A_1,A_2,A_3$ we have
 \begin{eqnarray}
 P(A_1 \cup A_2 \cup A_3)&=&P\left(A_1 \cup (A_2\cap A_1^c)  \cup (A_3 \cap A_1^c \cap A_2^c)\right)\nonumber \\
  &=&P(A_1) + P(A_2\cap A_1^c) + P(A_3 \cap A_1^c \cap A_2^c)\nonumber \\
  &\leq&P(A_1) + \frac{P(A_2\cap A_1^c)}{P(A_1^c)} + \frac{P(A_3 \cap A_1^c \cap A_2^c)}{P(A_1^c \cap A_2^c)} \nonumber \\
   &\stackrel{}{=}& P(A_1) + P(A_2| A_1^c) + P(A_3| A_1^c \cap A_2^c)\nonumber \\
   &\stackrel{}{=}& P(A_1) + P(A_2| A_1^c) + P(A_3| A_1^c,A_2^c)
 \end{eqnarray}
\end{proof}

 Using Lemma \ref{l_conditional_upper} we bound the probability of error in the supper block $Bn$ by the sum of the probability of having an error in each block $b$ given that in previous blocks $(b+1,...,B)$ the messages were decoded correctly.

First let us bound the probability that for some $b$, Transmitter 1 decodes the message $m'_{2,b}$  incorrectly or Transmitter 2 decodes the message $m'_{1,b}$  incorrectly  at the end of block $b$. Using Lemma \ref{l_conditional_upper} it suffices to show that the probability of error-decoding  in each block $b$ goes to zero, assuming that all previous messages in block $(1,2,...,b-1)$ were decoded correctly.

Let $E_{1,b}$ be the event that Transmitter 1 has an error in decoding  $m'_{2,b}$ and let
$E_{2,b}$ be the event that Transmitter 2 has an error in decoding  $m'_{1,b}$. The term
$P(E_{1,b}\cup E_{2,b}|E_{0,b-1}^c)$ is the probability  that Transmitter 1 or 2 incorrectly
decoded  $m'_{2,b}$ and $m'_{1,b}$, respectively, given that $m'_{1,b-1}$ and $m'_{2,b-1}$ were
decoded correctly. Without loss of generality let's assume that $m'_{1,b}=m'_{2,b}=1.$ An error
occurs if and only if there is another message $m'_{1,b}>1$ that maps to the same codeword as
$z_1^n(1,u^n)$ or there is another message $m'_{2,b}>1$ that maps to the same codeword as
$z_2^n(1,u^n)$. The probability that $z_1^n(i,u^n)=z_1^n(1,u^n)$ where $i>1$ and where
$(z_1^n(1,u^n),u^n)\in T_{\epsilon}^{(n)}(Z_1,U)$ is bounded by
$2^{-n(H(Z_1|U)-\delta(\epsilon))}$, where $\delta(\epsilon)$ goes to zero as $\epsilon$ goes to
zero. Hence
\begin{eqnarray}
P(E_{1,b}\cup E_{2,b}|E_{1,b-1}^c, E_{2,b-1}^c)&\stackrel{(a)}{\leq}& P(E_{1,b}|E_{1,b-1}^c, E_{2,b-1}^c)+ P(E_{2,b}|E_{1,b-1}^c, E_{2,b-1}^c) \nonumber \\
&\leq& \sum_{i=2}^{2^{nR_1'}}2^{-n(H(Z_1|U)-\delta(\epsilon))}+\sum_{i=2}^{2^{nR_2'}}2^{-n(H(Z_2|U)-\delta(\epsilon))}\nonumber\\
&\leq& 2^{n(R_1'-n(H(Z_1|U))+\delta(\epsilon))}+2^{n(R_2'-n(H(Z_2|U)+\delta(\epsilon)))},
\end{eqnarray}
where inequality (a) follows from the union bounds. Now we bound the probability that the
receiver decodes the messages $(m'_{1,b-1}, m'_{2,b-1})$, or $ m''_{2,b}$ or $ m''_{2,b}$
incorrectly at block $b$ given that at block $b+1$ the messages $(m'_{1,b}, m'_{2,b})$  were
decoded correctly and given that Transmitter 1 and 2 encodes the right messages $(m'_{1,b-1},
m'_{2,b-1})$ in block $b$. Without loss of generality assume $(m'_{1,b-1}, m'_{2,b-1})=1$ (for
simplicity we index both messages by one index), $ m''_{2,b}=1$ and $ m''_{2,b}=1$ . Let us
define the event
\begin{equation}
E_{i,j,k,b}\triangleq\left\{
\left(u^n(i),z_1^n(u^n,m'_{1,b}),z_2^n(u^n,m'_{2,b}),x_1^n(u^n,z_1^n,j),x_2^n(u^n,z_2^n,k),y^n\right)\in
T_{\epsilon}^{(n)}(U,Z_1,Z_2,X_1,X_2,Y)\right\}.
\end{equation}
An error occurs if either the correct codewords are not jointly
typical with the received sequences, i.e., $E_{1,1,1,b}^c$, or there
exists a different $(i, j,k)\neq(1,1,1)$ such that $E_{i,j,k,b}$
occurs. Let $P_{e,b}^{(n)}$ be the error-decoding at block $b$ given that in blocks $(b+1,...,B)$ there was no error-decoding. From the union of bounds we obtain that
\begin{equation}\label{e_pe}
P_{e,b}^{(n)}\leq \Pr(E_{1,1,1,b}^c)+\sum_{i=1,j=1,k>1}
\Pr(E_{i,j,k,b})+\sum_{i=1,j>1,k=1} \Pr(E_{i,j,k,b})+\sum_{i=1,j>1,k>1}
\Pr(E_{i,j,k,b})+\sum_{i>1,j\geq1,k\geq 1} \Pr(E_{i,j,k,b}).
\end{equation}
Now let us show that each term in (\ref{e_pe}) goes to zero as the
blocklength of the code $n$ goes to infinity.
\begin{itemize}
\item Upper-bounding $\Pr(E_{1,1,1}^c)$:
Since we assume that the Transmitter 1 and 2 encode the right $(m'_{1,b-1}, m'_{2,b-1})$  and the receiver decoded the right $(m'_{1,b}, m'_{2,b})$ in block $b+1$, by the LLN  $\Pr(E_{1,1,1,b}^c)\to 0.$
\item Upper-bounding $\sum_{i=1,j=1,k>1}
\Pr(E_{i,j,k})$:  The probability that $Y^n$, which is generated according to
$P(y|x_1,x_2)=P(y|x_1,x_2,u,z)$, is jointly typical with $x_2^n$, which was generated according
to $P(x_2|z_2,u)=P(x_2|u,z_1,z_2,x_1)$, where $(x_1^n,z_1^n,z_2^n,u^n)\in
T^{(n)}_\epsilon(X_1,Z_1,Z_2,U)$ is bounded by (Lemma \ref{l_typical})
\begin{equation}
\Pr\{(x_1^n,z_1^n,X_2^n,z_2^n,u^n,Y^n)\in
T^{(n)}_\epsilon|(x_1^n,z_1^n,z_2^n,u^n)\in T^{(n)}_\epsilon\}\leq
2^{-n(I(X_2;Y|X_1,Z_2,U)-\delta(\epsilon))}.
\end{equation}
Hence, we obtain
\begin{eqnarray}\label{e_k}
\sum_{i=1,j=1,k>1}
\Pr(E_{i,j,k,b})&\leq&2^{nR''_2}2^{-n(I(X_2;Y|X_1,Z_1,Z_2,U)-\delta(\epsilon))}.
\end{eqnarray}

\item Upper-bounding $\sum_{i=1,j>1,k=1} \Pr(E_{i,j,k,b})$:
Similarly, to (\ref{e_k}) we obtain
\begin{eqnarray}
\sum_{i=1,j>1,k=1}
\Pr(E_{i,j,k,b})&\leq&2^{nR''_1}
2^{-n(I(X_1;Y|X_2,Z_1,Z_2,U)-\delta(\epsilon))}.
\end{eqnarray}

\item  Upper-bounding $\sum_{i=1,j>1,k>1}
\Pr(E_{i,j,k,b})$ by
\begin{eqnarray}
\sum_{i=1,j>1,k>1}
\Pr(E_{i,j,k,b})&\leq&2^{n(R_2''+R''_1)}
2^{-n(I(X_2,X_1;Y|Z_1,Z_2,U)-\delta(\epsilon))}.
\end{eqnarray}

\item  Upper-bounding $\sum_{i>1,j\geq1,k\geq 1} \Pr(E_{i,j,k,b})$ by
\begin{eqnarray}\label{e_u5}
\sum_{i>1,j\geq1,k\geq 1}
\Pr(E_{i,j,k,b})&\leq&2^{n(R''_1+R'_1+R_2)}
2^{-n(I(X_2,X_1,U,Z_1,Z_2;Y)-\delta(\epsilon))}\nonumber\\
&=&2^{n(R_1+R_2-I(X_2,X_1;Y)-\delta(\epsilon))}
\end{eqnarray}

\end{itemize}

To summarize we obtained that if $R_1'=R_1-R_1''$, $R_1''$,  $R_2'=R_2-R_2''$, $R_2''$ and $R_2$ satisfy
\begin{eqnarray}\label{e_achie_in}
R_1-R_1''&\leq& H(Z_1|U),\nonumber \\
R_2-R_2''&\leq& H(Z_2|U),\nonumber \\
R''_1&\leq& I(X_1;Y|X_2,Z_1,U),\nonumber\\
R''_2&\leq& I(X_2;Y|X_1,Z_2,U),\nonumber\\
R''_1+R''_2&\leq& I(X_1,X_2;Y|Z_1,Z_2,U),\nonumber\\
R_1+R_2&\leq& I(X_2,X_1;Y),
\end{eqnarray}
then there exists a sequence of code with a probability of error that goes to zero as the block
length goes to infinity. Using Fourier$-$Motzkin elimination \cite{LecturesConvexSets10} first
for $R_1''$ we obtain
\begin{eqnarray}
R_1- H(Z_1|U)&\leq& I(X_1;Y|X_2,Z_1,U) ,\nonumber \\
R_2-R_2''&\leq& H(Z_2|U),\nonumber \\
R''_2&\leq& I(X_2;Y|X_1,Z_2,U),\nonumber\\
R_1- H(Z_1|U)+R''_2&\leq& I(X_1,X_2;Y|Z_1,Z_2,U),\nonumber\\
R_1+R_2&\leq& I(X_2,X_1;Y),
\end{eqnarray}
and applying Fourier$-$Motzkin elimination also  for $R_2''$ we obtain
\begin{eqnarray}
R_1- H(Z_1|U)&\leq& I(X_1;Y|X_2,Z_1,U) ,\nonumber \\
R_2-H(Z_2|U)&\leq&I(X_2;Y|X_1,Z_2,U) ,\nonumber \\
R_1- H(Z_1|U)+R_2-H(Z_2|U)&\leq& I(X_1,X_2;Y|Z_1,Z_2,U),\nonumber\\
R_1+R_2&\leq& I(X_2,X_1;Y),
\end{eqnarray}
which is equivalent to the region of Case A in (\ref{e_region2_case_a}).\hfill \QED

 {\it Achievability for Case B:} The achievability of case B is very similar to case A, only that the codewords of $X_2$ needs to be generated according to shannon strategy  (or a code-trees ) rather than codewords. This is due to the fact that $Z_{1,i}$ is known causally and $X_2$ is generated according to a distribution $P(x_2|u,z_1)$. \hfill \QED


\section{common message \label{s_commom_message}}
Let us now consider the case where a common message, $m_0\in\{1,2,...,2^{nR_0}\}$, is known to
encoders 1 and 2 and needs to be transmitted to the decoder in addition to the private messages
$m_1,m_2$.  Hence Encoder 1 is given by the function
\begin{eqnarray}
&\text{Case A, B,}&f_{1,i}:\{1,...,2^{nR_0}\}\times\{1,...,2^{nR_1}\}\times  \mathcal Z_2^{i-1}
\mapsto \mathcal
\mathcal X_{1,i},
\end{eqnarray}
and  Encoder 2 is given by the functions
\begin{eqnarray}
&\text{Case A,}& f_{2,i}:\{1,...,2^{nR_0}\}\times\{1,...,2^{nR_2}\}\times  \mathcal Z_1^{i-1}
\mapsto \mathcal
\mathcal X_{1,i},\nonumber \\
&\text{Case B,}& f_{2,i}:\{1,...,2^{nR_0}\}\times\{1,...,2^{nR_2}\}\times  \mathcal Z_1^{i}
\mapsto \mathcal \mathcal X_{1,i}.
\end{eqnarray}
Remarkably, no additional auxiliary random variable is needed to characterizes the capacity
region, since the partial cribbing is used for generating a common message. Let the rate regions
$\mathcal R_A^0$ and $\mathcal R_B^0$ be defined exactly as $\mathcal R_A$ and $\mathcal R_B$
only that the last inequality in (\ref{e_region2_case_a}), i.e., $R_1+R_2\leq I(X_1,X_2;Y)$,  is
replaced by
\begin{equation}
R_0+R_1+R_2\leq I(X_1,X_2;Y).
\end{equation}

\begin{theorem}[Capacity region in the case of a common message]\label{t_mac_crib_common}
The capacity regions of the MAC with strictly-causal (Case A), and mixed causal and
strictly-causal (Case B) partial cribbing with a common message  are
$\mathcal R_A^0$ and $\mathcal R_B^0$, respectively. 
\end{theorem}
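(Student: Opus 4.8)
The plan is to reduce Theorem \ref{t_mac_crib_common} to Theorem \ref{t_mac_one_crib} by showing that adding a common message $m_0$ of rate $R_0$ does not require any new auxiliary variable, and that the only change in the capacity region is the replacement of the sum-rate bound $R_1+R_2 \le I(X_1,X_2;Y)$ by $R_0+R_1+R_2 \le I(X_1,X_2;Y)$. Both directions (converse and achievability) should be obtainable by small modifications of the arguments already carried out in Sections \ref{s_converse} and \ref{s_achievability}.

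\emph{Converse.} First I would repeat the converse chain of Case A (and then Case B) but now bounding $n(R_0+R_1+R_2) = H(M_0,M_1,M_2)$. The key structural fact is that the pair of cribbing sequences $(Z_1^n,Z_2^n)$ is a deterministic function of $(M_0,M_1,M_2)$, exactly as before it was a deterministic function of $(M_1,M_2)$; likewise $X_{1,i}$ is a deterministic function of $(M_0,M_1,Z_2^{i-1})$ and $X_{2,i}$ of $(M_0,M_2,Z_1^{i-1})$ (for Case A). Define $U_i \triangleq (Z_1^{i-1},Z_2^{i-1})$ as in \eqref{e_u_case_a}. The single-user-type bounds on $R_1$ and on $R_2$ go through verbatim: in \eqref{e_Hm1_con} one conditions throughout on $M_0$ in addition to $M_2$ — since $H(M_1) = H(M_1|M_0,M_2)$ by independence of the three messages — and the Markov/Fano steps are unchanged. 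The sum bound $n(R_1+R_2) \le \sum_i I(X_{1,i},X_{2,i};Y_i) + n\epsilon_n$ becomes $n(R_0+R_1+R_2) \le \sum_i I(X_{1,i},X_{2,i};Y_i) + n\epsilon_n$ because $(X_1^n,X_2^n)$ is now a deterministic function of $(M_0,M_1,M_2)$ and $Y^n-(X_1^n,X_2^n)-(M_0,M_1,M_2)$ holds; this is the only place $R_0$ enters. The bound of \eqref{e_Hm1m2_con2} on $H(M_1,M_2)$ also still gives $R_1+R_2 \le I(X_1,X_2;Y|U,Z_1,Z_2)+H(Z_1,Z_2|U)$ (it does not need $R_0$). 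Finally the Markov-chain verification $Z_{1,i}-U_i-Z_{2,i}$, $X_{1,i}-(U_i,Z_{1,i})-X_{2,i}$, $X_{2,i}-(U_i,Z_{2,i})-X_{1,i}$ is checked exactly as before, since conditioning additionally on $M_0$ (which is independent of $M_1,M_2$ and absorbed into all three variables symmetrically) does not disturb the factorization of $P(m_0,m_1,m_2,z_1^n,z_2^n)$. The time-sharing/standard single-letterization step with the uniform index $Q$ and $U \triangleq (Q,U_Q)$ closes the converse. Case B is obtained, as in the proof of Theorem \ref{t_mac_one_crib}, by the one-line modification of the final Markov chain to $X_{2,i}-(U_i,Z_{1,i},Z_{2,i})-X_{1,i}$.

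\emph{Achievability.} Here the clean way is to fold $m_0$ into the cloud center: the common message plays exactly the role that the ``common part'' $(m'_{1,b-1},m'_{2,b-1})$ already plays in the block-Markov/backward-decoding scheme of Section \ref{s_achievability}. Concretely, I would index the $u^n$ codebook by $(m_0, m'_{1,b-1}, m'_{2,b-1})$, i.e. generate $2^{n(R_0+R_1'+R_2')}$ cloud centers $u^n$ i.i.d. $\sim P(u)$, and keep the superposition structure ($z_1^n \sim P(z_1|u)$, $z_2^n \sim P(z_2|u)$, $x_1^n \sim P(x_1|z_1,u)$, $x_2^n \sim P(x_2|z_2,u)$) and the backward-decoding schedule unchanged, with $m_0$ held fixed across all $B$ blocks. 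Since both encoders know $m_0$, this is a legal encoding. The error analysis is literally the one already written: in \eqref{e_pe} the term $\sum_{i>1,j\ge1,k\ge1}\Pr(E_{i,j,k,b})$ now sums over $2^{n(R_0+R_1'+R_2'+R_1''+R_2'')} = 2^{n(R_0+R_1+R_2)}$ wrong cloud-center triples, giving the bound $2^{n(R_0+R_1+R_2-I(X_1,X_2;Y)-\delta(\epsilon))}$, so the last inequality of \eqref{e_achie_in} is replaced by $R_0+R_1+R_2 \le I(X_1,X_2;Y)$; every other term in \eqref{e_pe} and every other inequality in \eqref{e_achie_in} is untouched (the cribbing-decoding bounds $R_1-R_1'' \le H(Z_1|U)$, $R_2-R_2'' \le H(Z_2|U)$, and the three interior bounds on $R_1'',R_2''$ do not see $R_0$). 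Running the same Fourier--Motzkin elimination of $R_1''$ and $R_2''$ then yields exactly $\mathcal R_A^0$, and Case B again follows by replacing the $X_2$-codewords with Shannon strategies as in the proof of Theorem \ref{t_mac_one_crib}.

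\emph{Main obstacle.} There is no deep obstacle; the content of the theorem is precisely the observation that partial cribbing ``already produces'' a shared codeword, so a genuine common message costs exactly its rate in the sum constraint and nothing else. The one point needing a little care is checking that all the conditional-independence (Markov) relations used in the converse survive the insertion of $M_0$ — in particular that the factorization $P(m_0,m_1,m_2,z_1^n,z_2^n)=P(m_0)P(m_1)P(m_2)\prod_i P(z_{1,i}|z_2^{i-1},m_0,m_1)\prod_i P(z_{2,i}|z_1^{i-1},m_0,m_2)$ still makes $P(m_1|z_1^n,z_2^n,m_0,m_2)$ independent of $m_2$ and vice versa, which is what underlies $X_{1,i}-(U_i,Z_{1,i})-X_{2,i}$. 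This is immediate from the same computation as in \eqref{e_Hm1_con}ff, now with $m_0$ carried along as a fixed conditioning symbol. The cardinality bound of Lemma \ref{lemma_cardinality} is unaffected, since the same four (resp. three) functionals plus $P(y)$ (resp. $P(x_1,x_2)$) must be preserved. \hfill\QED
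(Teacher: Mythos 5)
Your proof follows essentially the same route as the paper's: in the converse you bound $H(M_0,M_1,M_2)$ for the sum constraint and carry $M_0$ as extra conditioning through the chains \eqref{e_Hm1m2_con2} and \eqref{e_Hm1_con}, and in the achievability you fold $m_0$ into the cloud center $u^n$ so that only the all-messages error event changes, yielding $R_0+R_1+R_2\leq I(X_1,X_2;Y)$; this is exactly what the paper does.

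There is one slip in your converse that matters: you declare $U_i\triangleq(Z_1^{i-1},Z_2^{i-1})$ as in \eqref{e_u_case_a}, whereas the paper redefines $U_i\triangleq(M_0,Z_1^{i-1},Z_2^{i-1})$ in \eqref{e_u_case_com}, and this redefinition is not cosmetic. With a common message, $X_{1,i}$ is a function of $(M_0,M_1,Z_2^{i-1})$ and $X_{2,i}$ of $(M_0,M_2,Z_1^{i-1})$, so the two inputs are coupled through $M_0$; conditioned only on $(Z_1^{i-1},Z_2^{i-1})$ the factorization $P(x_1,z_1|u)P(x_2,z_2|u)$ required by $\mathcal R_A^0$ fails, and the Markov chains $X_{1,i}-(U_i,Z_{1,i})-X_{2,i}$ etc.\ do not hold. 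Moreover, the terms you obtain after conditioning on $M_0$, such as $H(Z_{1,i}|Z_1^{i-1},Z_2^{i-1},M_0)$ and $I(Y_i;X_{1,i}|X_{2,i},Z_1^i,Z_2^i,M_0)$, can only be identified with $H(Z_{1,i}|U_i)$ and $I(Y_i;X_{1,i}|X_{2,i},U_i,Z_{1,i})$ if $M_0$ is part of $U_i$ (dropping conditioning from a conditional mutual information is not a valid upper bound in general). Your own verification in the last paragraph — that $P(m_1|z_1^n,z_2^n,m_0,m_2)$ does not depend on $m_2$ — is precisely the statement $M_1-(M_0,Z_1^n,Z_2^n)-M_2$, i.e.\ the Markov chain for the \emph{augmented} $U_i$; so the fix is simply to absorb $M_0$ into $U_i$ as the paper does, after which everything you wrote goes through.
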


Note that if there is no cribbing, i.e., $Z_1$ and $Z_2$ are constant, we obtain the capacity
region of the MAC with a common message as derived by Slepian and Wolf \cite{Slepian_Wolf_MAC73}.
We sketch here only the differences between the proof of Theorem \ref{t_mac_crib_common} and
Theorem \ref{t_mac_one_crib}.

{\it Proof of Theorem \ref{t_mac_crib_common}}

{\bf Converse:} Similar to the sequence of inequalities in (\ref{e_m1m2_con}) we have
\begin{eqnarray}\label{e_m1m2_com}
n(R_0+R_1+R_2)&=& H(M_0,M_{1},M_2)\nonumber \\
&\stackrel{}{\leq}& \sum_{i=1}^n I(X_{1,i},X_{2,i};Y_i)+n\epsilon_n.
\end{eqnarray}
Adding conditioning on $M_0$ in the sequence of inequalities (\ref{e_Hm1m2_con2}) we obtain
\begin{eqnarray}\label{e_Hm1m2_com}
n(R_1+R_2)&=& H(M_{1},M_2)\nonumber \\
&=& H(M_{1},M_2|M_0)\nonumber \\
&\stackrel{}{\leq}& \sum_{i=1}^n  H(Z_{1,i},Z_{2,i}|Z_1^{i-1},Z_2^{i-1},M_0)+I(X_{1,i},X_{2,n};Y_i|Z_{1}^i,Z_{2}^i,M_0)+n\epsilon_n, \nonumber \\
&\stackrel{}{=}& \sum_{i=1}^n
H(Z_{1,i},Z_{2,i}|U_i)+I(X_{1,i},X_{2,n};Y_i|Z_{1,i},Z_{2,i},U_i)+n\epsilon_n,
\end{eqnarray}
where the last step is due to the new definition of $U_i$ as
\begin{eqnarray}
U_i&\triangleq& (M_0,Z_1^{i-1},Z_2^{i-1})\label{e_u_case_com}.
\end{eqnarray}
Similarly, adding conditioning on $M_0$ in the sequence of inequalities (\ref{e_Hm1_con}) we
obtain
\begin{eqnarray}\label{e_Hm1_com}
nR_1&\stackrel{}{=}& H(M_{1})\nonumber \\
&\stackrel{}{=}& H(M_{1}|M_2,M_0)\nonumber \\
&\stackrel{}{\leq}& \sum_{i=1}^n H(Z_{1,i}|Z_1^{i-1}, Z_2^{i-1},M_0)+I(Y_i;X_{1,i}|X_{2,i},Z_1^{i},Z_2^{i},M_0)+n\epsilon_n \nonumber \\
&\stackrel{}{=}& \sum_{i=1}^n H(Z_{1,i}|U_i)+I(Y_i;X_{1,i}|X_{2,i},U_i,Z_{1,i})+n\epsilon_n.
\end{eqnarray}
In a similar way, we obtain the inequality for $R_2$ as in (\ref{e_Hm2_con}).

{\bf Achievability:} The achievability proof is  similar to that in Theorem \ref{t_mac_one_crib}
except that we generate $2^{n(R_1'+R_2'+R_0)}$ codewords $u^n$ according to i.i.d. $\sim P(u)$,
rather than $2^{n(R_1'+R_2')}$, and wherever we have in the achievability proof of Theorem
\ref{t_mac_one_crib} $(m'_{1,b-1},m'_{2,b-1})$ we should now have $(m_0,m'_{1,b-1},m'_{2,b-1})$.
Hence we obtain the same sequence of inequalities as in (\ref{e_achie_in}) except that in the
last inequality which corresponds to an error in all messages we have
\begin{equation}R_0+R_1+R_2\leq I(X_2,X_1;Y).
\end{equation}

\hfill \QED
\section{Special case of partial cribbing: Semi-deterministic relay channel\label{s_semi_deterministic}}
As a special case of the partial cribbing encoders, let us consider the case where Encoder 2 has
no message to send, i.e., $R_2=0$, and only Encoder 2 cribs from Encoder 1, i.e., $Z_2$ is a
constant.  We show here that indeed the region obtained via partial cribbing when $R_2=0$ and the
region obtained via semi-deterministic relay channel coincide.

{\it Case A, semi-deterministic relay with  a delay:} This case become a special case of the
semi-deterministic relay channel which was introduced and solved by El-Gamal
\cite{ElGamalrefA82SemiDetereministicRelay}, where Encoder 2 plays the role of the relay. In such
a case the region $\mathcal R_A$ becomes
\begin{equation}
\mathcal R_A=\left\{\begin{array}{l}R_1\leq H(Z_1|U)+I(X_1;Y|X_2,Z_1,U),\\
R_1\leq I(X_1,X_2;Y|U,Z_1)+H(Z_1|U),\\
R_1\leq I(X_1,X_2;Y), \text{ for }\\
 P(u)P(z_1|u)P(x_1|z_1,u)P(x_2|u)P(y|x_1,x_2). \end{array} \right\}
\end{equation}
Clearly, $H(Z_1|U)+I(X_1;Y|X_2,Z_1,U)\leq I(X_1,X_2;Y|U,Z_1)+H(Z_1|U)$ hence the region we
obtained is $R_1\leq \min(H(Z_1|U)+I(X_1;Y|X_2,Z_1,U),I(X_1,X_2;Y))$ for some
$P(u)P(z_1|u)P(x_1|z_1,u)P(x_2|u)$. Now consider
\begin{eqnarray}
H(Z_1|U)+I(X_1;Y|X_2,Z_1,U)&\stackrel{(a)}{=}&H(Z_1|U,X_2)+I(X_1;Y|X_2,Z_1,U)\nonumber \\
&\stackrel{(b)}{\leq}&H(Z_1|X_2)+I(X_1;Y|X_2,Z_1),\label{e_upper_semi}
\end{eqnarray}
where step (a) follows from the Markov chain  $X_2-U-Z_1$ and step (b) from the fact that
conditioning reduces entropy and from the Markov chain $Y-(X_1,Z_1,X_2)-U$. By choosing $U=X_2$
we obtain the upper bound of (\ref{e_upper_semi}) and the expression $I(X_1,X_2;Y)$ does not
decrease. Hence the capacity region is
\begin{equation}\label{e_semi_deterministic_with_delay}
R_1\leq \min(H(Z_1|X_2)+I(X_1;Y|X_2,Z_1),I(X_1,X_2;Y))
\end{equation}
 for some $P(x_1,x_2)$. Eq. (\ref{e_semi_deterministic_with_delay}) coincides with the result in
 \cite{ElGamalrefA82SemiDetereministicRelay}.

{\it Case B, semi-deterministic relay without delay:} In this case $\mathcal R_B$ become the set
of rates $R_1$ that satisfies
\begin{equation}\label{e_semi_without_delay}
R_1\leq \min( H(Z_1|U)+I(X_1;Y|X_2,Z_1,U),I(X_1,X_2;Y))
\end{equation}
for some $P(x_1,z_1,u)P(x_2|u,z_1).$ The case of relays without delay was investigated by
El-Gamal et. al. in \cite{ElGamal07RelayNoDelay} where it was shown that the capacity region for
the semi-deterministic relay without delay which is denoted by $C_{0,\text{semi-det}}$ is
\begin{equation}\label{e_semi_without_delya_capacity_elGammal}
C_{0,\text{semi-det}}=\max_{P(u,x_1), x_2=f(u,z_1)} \min (I(X_1;Y,Z_1|U), I(U,X_1;Y)) .
\end{equation}
At first glance, the expressions in (\ref{e_semi_without_delay}) seem to be different from the
expression in (\ref{e_semi_without_delya_capacity_elGammal}), but with some simple manipulations
one can show that the expression are equivalent. In particular, the first term in
(\ref{e_semi_without_delya_capacity_elGammal}) may be written as
\begin{eqnarray}
I(X_1;Y,Z_1|U)&=&I(X_1;Z_1|U)+I(X_1;Y|U,Z_1)\nonumber\\
&\stackrel{(a)}{=}&H(Z_1|U)+I(X_1;Y|U,Z_1)\nonumber\\
&\stackrel{(b)}{=}&H(Z_1|U)+I(X_1;Y|U,Z_1,X_2),
\end{eqnarray}
where step (a) follows from the fact that $Z_1$ is a function of $X_1$ and step (b) from the fact
that $X_2$ is a function of $(U,Z_1)$. The second term in
(\ref{e_semi_without_delya_capacity_elGammal}) may be written as
\begin{eqnarray}
I(U,X_1;Y)&\stackrel{(a)}{=}&I(U,X_1,X_2;Y)\nonumber\\
&\stackrel{(b)}{=}&I(X_1,X_2;Y),
\end{eqnarray}
where step (a) follows from the fact that $X_2$ is a function of $(U,X_1)$ and step (b) follows
from the  Markov chain $Y-(X_1,X_2)-U$. Now, to conclude that  (\ref{e_semi_without_delay}) and
(\ref{e_semi_without_delya_capacity_elGammal}) are equivalent we need to show that it suffices to
consider only distributions where $X_2$ is a function of $(U,Z_1)$ in
(\ref{e_semi_without_delay}). It follows from \cite[Lemma
1]{chenLeiCuffPermuter_multipleDescription} that there exists a random variable $W$ independent
of $(U,Z_1)$ and satisfies $W-(X_2,U,Z_1)-(Y,X_1)$ such that $X_2$ is a deterministic function of
$(U,Z_1,W)$. Therefore
\begin{eqnarray}
H(Z_1|U)+I(X_1;Y|X_2,Z_1,U)&=&H(Z_1|U,W)+I(X_1;Y|X_2,Z_1,U,W)\nonumber\\
&=&H(Z_1|\tilde U)+I(X_1;Y|X_2,Z_1,\tilde U),
\end{eqnarray}
where $\tilde U=(U,W)$. Hence it suffices to consider $X_2$ that is a function of $(\tilde
U,Z_1)$ and it emerges that (\ref{e_semi_without_delay}) is equivalent to
(\ref{e_semi_without_delya_capacity_elGammal}).

\section{Gaussian MAC with quantized cribbing \label{s_Gaussian_mac}}
In this section we consider the additive Gaussian noise MAC, i.e., $Y=X_1+X_2+W$, where $W$ is a
memoryless Gaussian noise with variance $N$, i.e., $W\sim \text{Norm}(0,N)$. We assume a power
constraints $P_1$ and $P_2$ on the inputs from Encoder 1 and Encoder 2, respectively. If the
encoders do not cooperate than the capacity is given by
\begin{eqnarray}\label{e_gaussian_no_criibing}
R_1&\leq& \frac{1}{2}\log\left(1+\frac{P_1}{N}\right)\nonumber \\
R_2&\leq& \frac{1}{2}\log\left(1+\frac{P_2}{N}\right)\nonumber \\
R_1+R_2&\leq& \frac{1}{2}\log\left(1+\frac{P_1+P_2}{N}\right).
\end{eqnarray}
\begin{figure}[h!]{
\psfrag{B}[][][1]{Encoder 1} \psfrag{D}[][][1]{Encoder 2} \psfrag{m1}[][][1]{$m_1 \ \ \ \ $}
\psfrag{m2}[][][1]{$\in\{1,...,2^{nR_1}\}\ \ \ \ \ \ \ $} \psfrag{m3}[][][1]{$m_2 \ \ \ \ $}
\psfrag{m4}[][][1]{$\in\{1,...,2^{nR_2}\}\ \ \ \ \ \ \ $}  \psfrag{x1}[][][1]{$\ \ \ \ \ \ \ \ \
\ \ \ X_{1,i}(m_1)$} \psfrag{x2}[][][1]{\; \; \; \; \; $X_{2,i}(m_2,Z^{i})$}
 \psfrag{Yi}[][][1]{$Y$} \psfrag{W}[][][1]{Decoder}
\psfrag{t}[][][1]{$$} \psfrag{G2}[][][0.9]{Quantization} \psfrag{G1}[][][0.9]{Scalar}
\psfrag{W1}[][][0.9]{$\ \ \ \ \ \ \ \ \ \ \ \ \ \ \ \ W\sim$Norm$(0,N)$}

\psfrag{Y}[][][1]{$\ \ \ \ \ \ \hat m_1(Y^n)$} \psfrag{Y2}[][][1]{$\ \ \ \ \hat m_2(Y^n)$}
\psfrag{Z}[][][1]{$Z_i$}

\centerline{\includegraphics[width=12cm]{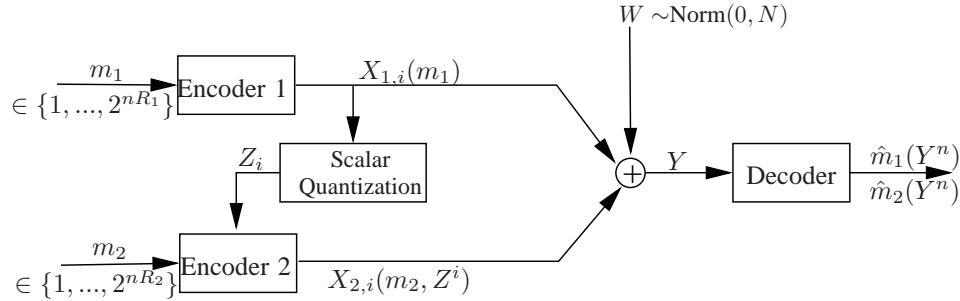}}

\caption{Gaussian MAC with quantized cribbing. The cribbing that Encoder 2 observes is the
quantized signal from Encoder 1. There exist power constraints $\sum_{i=1}^n E[X_{1,i}^2]\leq
P_1$ and $\sum_{i=1}^n E[X_{1,i}^2]\leq P_2$. }\label{f_mac_gauss_channel}
 }\end{figure}
If there is perfect cribbing from Encoder 1 to Encoder 2, either with delay or without the
capacity is the same as if Encoder 2 knows the message of Encoder 1 since Encoder 1 can send the
message in one epoch time. Hence, the capacity is the union over $0\leq \rho\leq 1$ of the
regions
\begin{eqnarray}\label{e_gaussian_perfect_cribbing}
R_2&\leq& \frac{1}{2}\log\left(1+\frac{P_2}{N}(1-\rho^2))\right)\nonumber \\
R_1+R_2&\leq& \frac{1}{2}\log\left(1+\frac{P_1+2\rho \sqrt{P_1P_2}+P_2}{N}\right).
\end{eqnarray}
Now, let us consider the case where Encoder 2 observes  a quantized version of the signal from
Encoder 1 without delay. The setting is depicted in Fig. \ref{f_mac_gauss_channel}. We assume
that the quantizer is a scalar quantizer designed such that under a Gaussain input with variance
$P_1=1$ the discrete values have the same probability (see Fig. \ref{f_quant} for an example of
2-bit quantizer).

\begin{figure}[h!]{
\psfrag{x1}[][][1]{$x_1$} \psfrag{Q1}[][][0.9]{$Q=1$} \psfrag{Q2}[][][0.9]{$Q=2\ \ \ \ $}
\psfrag{Q3}[][][0.9]{$Q=3 \ \ \ \ $} \psfrag{Q4}[][][0.9]{$Q=4$} \psfrag{p(x)}[][][0.9]{$f(x_1)$}
\centerline{\includegraphics[width=8cm]{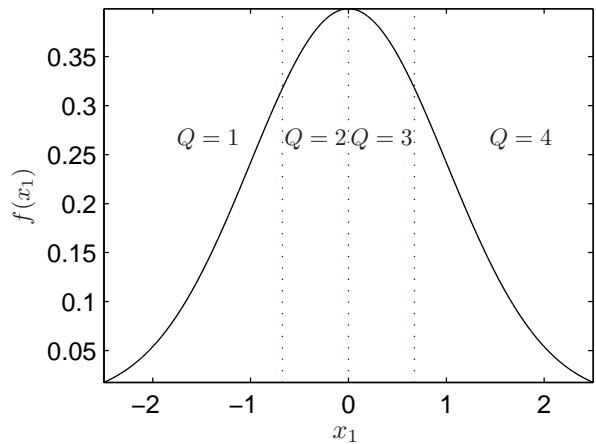}} 

\caption{The 2-bit quantizer's boundaries are designed such that if the input signal has a normal
distribution with variance $P_1=1$ the output values from the quantizer have equal probability.
The input to the 2-bit quantizer is $X_1$ and the output is $Q\in\{1,2,3,4\}.$ \label{f_quant} }
}\end{figure}
 Next, we consider a simple achievable scheme for the Gaussian MAC with a quantizer
cribbing without delay, where the power constraints are $P_1=P_2=1$ and the noise variance is
$N=\frac{1}{2}$. We evaluate the region $\mathcal R_B$ given by (\ref{e_region2_case_a}) and
(\ref{e_region2_case_b}) for the case where $U_1,U_2,Z_2$ are constants, $X_1\sim N(0,1)$, $Z_1$
is a quantized version of $X_1$ such that each value has equal probability.
\begin{figure}[h!]{
\psfrag{R1}[][][1]{$R_1$} \psfrag{R2}[][][1]{$R_2$}
\centerline{\includegraphics[width=9cm]{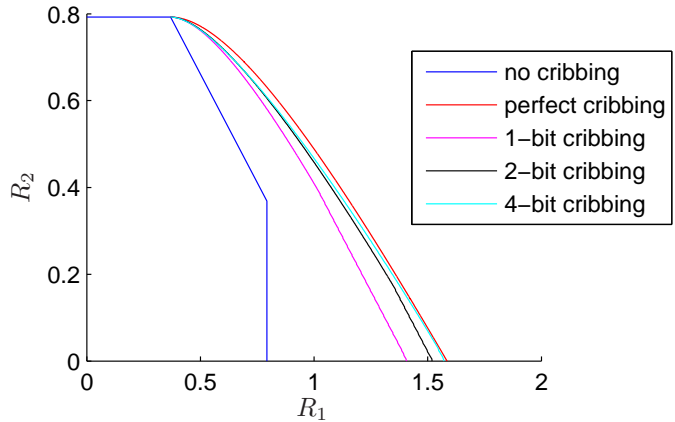}}

 \caption{Achievable regions of Gaussian MAC with a quantizer cribbing.}\label{f_mac_crib_Gauusian_graph}
 }\end{figure}
The input distribution is $P_{X_2|Z_1}(x_2|z_1)=\rho P_V(x_2)+(1-\rho)P_{X_1|Z_1}(x_2|z_1)$,
where $V\sim N(0,1)$ and is independent of $X_1$ and $Z$. Note that under these assumptions
$X_2\sim N(0,1)$ and therefore satisfies the power constraint. Fig.
\ref{f_mac_crib_Gauusian_graph} depicts the simple achievable scheme for different quantizers.
The blue line in Fig. \ref{f_mac_crib_Gauusian_graph} is the capacity region where there is no
cribbing, evaluated according to (\ref{e_gaussian_no_criibing}). The red line is the capacity
region where there is perfect cribbing, evaluated according to
(\ref{e_gaussian_perfect_cribbing}). The lines in between are achievable regions according to the
simple scheme we have described above. One can see that the main gain is already due to 1-bit
quantizer and that the difference between the achievable scheme with a 4-bit quantizer and the
capacity region where there is perfect cribbing is negligible.

\section{controlled cribbing \label{s_controlled_cribbing}}
Here we consider the case where the cribbing is controlled by an action which depends on
previously cribbed signals. In this study, only Encoder 2  cribs causally or strictly causally.
More precisely, at time $i$ there is a controller which takes action $a_{1,i}$
 and the cribbed signals from Encoder 1 to Encoder 2 at time $i$ is
$z_{1,i}=f(x_{1,i},a_{1,i})$  as shown in Fig. \ref{f_mac_crib_action}. The action at time $i$
depends on past cribbed observation, i.e., $a_{1,i}(z_1^{i-1})$ and the action is a limited
resource, namely, there is a restriction that $\frac{1}{n}\sum_{i=1}^n E[\Lambda(A_{1,i})]\leq
\Gamma,$ where $\Lambda(a_1)$ is a cost of taking action $a_1$.

\begin{figure}[h!]{
\psfrag{d}[][][0.9]{\;delay\ \ \ } \psfrag{B}[][][1]{Encoder 1} \psfrag{D}[][][1]{Encoder 2}
\psfrag{m1}[][][1]{$m_1 \ \ \ \ $} \psfrag{m2}[][][1]{$\in\{1,...,2^{nR_1}\}\ \ \ \ \ \ \ $}
\psfrag{m3}[][][1]{$m_2 \ \ \ \ $} \psfrag{m4}[][][1]{$\in\{1,...,2^{nR_2}\}\ \ \ \ \ \ \ $}
\psfrag{P}[][][1]{$\ \ P_{Y|X_1,X_2}$} \psfrag{x1}[][][1]{$\ \ \ \ \ \ \ \ \ \ \ \ X_{1,i}(m_1)$}
\psfrag{x2}[][][1]{\; \; \; \; \; $X_{2,i}(m_2,Z_1^{i})$} \psfrag{M}[][][1]{MAC}
\psfrag{s}[][][1]{$S$} \psfrag{Yi}[][][1]{$Y$} \psfrag{W}[][][1]{Decoder} \psfrag{t}[][][1]{$$}
\psfrag{G1}[][][0.9]{$z_{1,i}=g_1(a_{1,i}(z_1^{i-1}),x_{1,i})$}
\psfrag{G2}[][][0.9]{$z_{2,i}=g_2(a_{2,i}(z_2^{i-1}),x_{2,i})$} \psfrag{a}[][][1]{a}
\psfrag{b}[][][1]{b} \psfrag{c12}[][][1]{$ m_{12}\in$} \psfrag{c12b}[][][1]{$ \ \ \ \ \ \
\{1,...,2^{nC_{12}}\}$} \psfrag{Y}[][][1]{$\ \ \ \ \ \ \hat m_1(Y^n)$} \psfrag{Y2}[][][1]{$\ \ \
\ \hat m_2(Y^n)$} 
\centerline{\includegraphics[width=15cm]{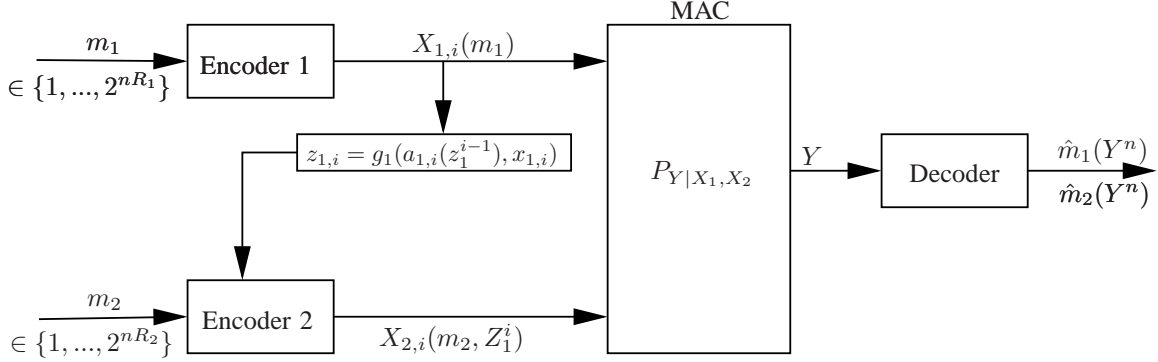}}
\caption{Partial  cribbing with actions. The action at time $i$ is $a_{1,i}$ and is determined by
previous cribbed observations i.e., $z_{1}^{i-1}$. The cribbed signal $z_{1,i}$ from Encoder 1 to
Encoder 2 is given by the deterministic function $z_{1,i}=g_1(a_{1,i},x_{1,i})$. There exists a
constraint on the actions of the form $\frac{1}{n}\sum_{i=1}^n E[\Lambda(A_{1,i})]\leq
\Gamma.$}\label{f_mac_crib_action}
 }\end{figure}

Let us now formally define a controlled code.
\begin{definition}\label{def_crib_act}
A $(2^{nR_1},2^{nR_2},n)$ {\it code} with controlled partial cribbing , as shown in Fig.
\ref{f_mac_crib_action}, consists at time $i$ of an encoding function at Encoder 1
\begin{eqnarray}
&\text{Case A, B,}&f_{1,i}:\{1,...,2^{nR_1}\} \mapsto \mathcal
\mathcal X_{1,i},
\end{eqnarray}
and an encoding function at Encoder 2 that changes according to the following case settings
\begin{eqnarray}
&\text{Case A}& f_{2,i}:\{1,...,2^{nR_2}\}\times  \mathcal Z_1^{i-1} \mapsto \mathcal
\mathcal X_{1,i},\nonumber \\
&\text{Case B}& f_{2,i}:\{1,...,2^{nR_2}\}\times  \mathcal Z_1^{i} \mapsto \mathcal \mathcal
X_{1,i},
\end{eqnarray}
and a controlled action
\begin{eqnarray}
g_{1,i}:  \mathcal Z_1^{i-1} \mapsto \mathcal
\mathcal A_{1,i},
\end{eqnarray}
 and a decoding function,
\begin{equation}
h:\mathcal Y^n \mapsto \{1,...,2^{nR_1}\} \times \{1,...,2^{nR_2}\}.
\end{equation}
The code needs to satisfy the constraint $\frac{1}{n}\sum_{i=1}^n E[\Lambda_1(A_{1,i})]\leq
\Gamma_1.$ The probability of error, achievable pair-rates and capacity region are defined in the
usual way for MAC as presented in Def. \ref{def_crib}.
\end{definition}

 Let us now define the following regions $\mathcal R_{A}^{a},\mathcal
R_{B}^{a}$, which are contained in $\mathbb{R}^2_{+}$, namely, contained in  the set of non
negative two dimensional real numbers.
\begin{equation}\label{e_region2_case_action}
\mathcal R_A^a=\left\{\begin{array}{l}R_1\leq H(Z_1|U,A_1)+I(X_1;Y|X_2,Z_1,U,A_1),\\
R_2\leq I(X_2;Y|X_1,U,A_1),\\
R_1+R_2\leq I(X_1,X_2;Y|U,A_1,Z_1)+H(Z_1|U,A_1),\\
R_1+R_2\leq I(X_1,X_2;Y), \text{ for }\\
 P(u,a_1)P(x_1,z_1|u,a_1)P(x_2|u,a_1)P(y|x_1,x_2) \text{ s.t. } E[\Lambda_1(A_{1})]\leq
\Gamma_1. \end{array} \right\}
\end{equation}
The region $\mathcal R_B^a$ is defined with the same set of inequalities as in
(\ref{e_region2_case_action}), but the joint distribution is of the form    \begin{equation}
 P(u,a_1)P(x_1,z_1|u,a_1)P(x_2|z_1,u,a_1)P(y|x_1,x_2)  \text{ s.t. } E[\Lambda(A_{1})]\leq
\Gamma.
\end{equation}


\begin{theorem}[Capacity region]\label{t_mac_one_crib_act}
The capacity regions of the MAC  with actions and with strictly-causal (Case A), and mixed causal
and strictly-causal (Case B), as described in Def. \ref{def_crib_act},
are $\mathcal R_A^a$, and  $\mathcal R_B^a$,  respectively. 
\end{theorem}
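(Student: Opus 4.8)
The plan is to mirror the proof of Theorem \ref{t_mac_one_crib}, adapting both the converse and the achievability to incorporate the action variable $A_1$, which is a deterministic function of the past cribbed observations $Z_1^{i-1}$. The key structural observation is that, because only Encoder 2 cribs and the action depends only on $z_1^{i-1}$, the action at time $i$ carries no information about $M_1$ beyond what $Z_1^{i-1}$ already reveals; this is what allows $A_1$ to be folded into the auxiliary variable $U$ together with $Z_1^{i-1}$ (and $Z_2^{i-1}$, which in the semi-deterministic setting here is a constant).

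\medskip

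\emph{Converse.} I would define $U_i \triangleq (Z_1^{i-1}, A_{1,i})$ (equivalently $(Z_1^{i-1})$, since $A_{1,i}$ is a function of it, but keeping $A_{1,i}$ explicit makes the distribution factorization transparent) and repeat the chains of inequalities (\ref{e_m1m2_con})--(\ref{e_Hm2_con}) essentially verbatim. The bound $n(R_1+R_2)\leq \sum_i I(X_{1,i},X_{2,i};Y_i)+n\epsilon_n$ is unchanged. For the sum-rate-with-cribbing bound, starting from $H(M_1,M_2) = H(M_1,M_2,Z_1^n)$ (since $Z_1^n$ is a function of $(M_1,M_2)$ and, in this section, so are the actions) gives $\sum_i H(Z_{1,i}|U_i) + I(X_{1,i},X_{2,i};Y_i|Z_{1,i},U_i)+n\epsilon_n$. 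For $R_2$: since Encoder 2 has no cribbing signal of its own to contribute as a ``common'' part, the term $H(Z_2|U)$ disappears and one directly gets $R_2 \leq I(X_2;Y|X_1,U,A_1)+\epsilon_n$ via $nR_2 = H(M_2|M_1) = H(M_2|M_1,Z_1^n) = I(M_2;Y^n|M_1,Z_1^n)+n\epsilon_n$ and single-letterizing, using that $X_{2,i}$ is a function of $(M_2,Z_1^{i-1})$ hence of $(M_2,U_i)$ for Case A (of $(M_2,Z_1^i)$ for Case B). The Markov structure $Z_{1,i}-(U_i,A_{1,i})-X_{2,i}$ and $A_{1,i}$ being a deterministic function of $U_i$ give exactly the claimed distribution $P(u,a_1)P(x_1,z_1|u,a_1)P(x_2|u,a_1)$; for Case B one instead verifies $X_{2,i}-(U_i,A_{1,i},Z_{1,i})-X_{1,i}$, exactly as in the Case B converse of Theorem \ref{t_mac_one_crib}. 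The cost constraint passes through immediately: $\frac1n\sum_i E[\Lambda_1(A_{1,i})]\leq \Gamma_1$ becomes $E[\Lambda_1(A_1)]\leq\Gamma_1$ after introducing the time-sharing variable $Q$ and absorbing it into $U$.

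\medskip

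\emph{Achievability.} I would reuse the block-Markov, rate-splitting, superposition, backward-decoding scheme of the Case A achievability proof, with $R_2' = 0$ (Encoder 2 has nothing to ``pre-announce'' via a cribbed common part — only Encoder 1's split message $m_1'$ is recovered by Encoder 2 through the cribbing). The new ingredient is the action: in each block, after the controller has generated the codeword $u^n$ (which now depends on $(m'_{1,b-1})$ and carries an i.i.d. $\sim P(u,a_1)$ structure, so that $a_1^n$ is the action sequence), Encoder 1 transmits $x_1^n$ drawn $\sim \prod P(x_{1,i}|z_{1,i},u_i,a_{1,i})$ and $z_{1,i}=g_1(a_{1,i},x_{1,i})$ is the resulting cribbed signal — which Encoder 2 observes (causally in Case B, strictly causally in Case A) and uses to decode $m'_{1,b}$, exactly as before but with the rate constraint $R_1' \leq H(Z_1|U,A_1)$. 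The decoder's joint-typicality test and the Fourier--Motzkin elimination of $R_1'',R_2''$ go through unchanged, yielding $\mathcal R_A^a$; Case B again uses Shannon strategies for $X_2$ in place of plain codewords. The cost constraint is met because the random code is generated from a distribution satisfying $E[\Lambda_1(A_1)]\leq\Gamma_1$ and, by the LLN, the empirical cost of a typical action sequence concentrates around its mean; a standard argument (expurgation or a slight slackening of $\Gamma_1$) handles the atypical codewords.

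\medskip

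The main obstacle is bookkeeping the dependence of the action on the \emph{cribbed} sequence $z_1^{i-1}$ rather than on messages directly, and making sure this is consistent in both directions: in the converse one must check that $Z_1^{i-1}$ genuinely determines $A_{1,i}$ \emph{given the code} (it does, by definition of the controlled code) so that $U_i=(Z_1^{i-1},A_{1,i})$ has the right Markov relations; in the achievability one must verify that the action sequence $a_1^n$ generated as part of the $u^n$-codeword is indeed realizable by a controller that sees only $z_1^{i-1}$ — here one appeals to the fact that in the random code $a_1^n$ is chosen i.i.d.\ and independently of everything, so fixing it as a deterministic function of the codeword index (which is known to the controller through the block-Markov structure) is legitimate, and the subtlety is only that the controller must be allowed to depend on $z_1^{i-1}$ through the decoded $m'_{1,b-1}$, which is consistent with causality. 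Everything else is a routine adaptation of the already-proved Theorem \ref{t_mac_one_crib}, so I would state the proof as ``identical to that of Theorem \ref{t_mac_one_crib} with $U_i=(Z_1^{i-1},A_{1,i})$, $Z_2\equiv$ const, $R_2'=0$'' and only spell out the $R_2$ bound and the cost-constraint steps in detail.
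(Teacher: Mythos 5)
Your proposal matches the paper's proof, which likewise obtains both directions by substituting $(U_i,A_{1,i})$ for $U_i$ in the converse and achievability of Theorem~\ref{t_mac_one_crib}, using that $A_{1,i}$ is a deterministic function of $Z_1^{i-1}=U_i$ and that the controller can decode $m'_{1,b}$ at the end of block $b$ so the action sequence is realizable. You in fact spell out several points the paper leaves implicit (the $R_2$ bound losing the $H(Z_2|U)$ term, the cost constraint via typicality, and the causal realizability of $a_1^n$), but the approach is the same.
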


The proof is based on minor modification of the proof of the capacity region of the MAC with
partial cribbing presented in Theorem \ref{t_mac_one_crib}.
\begin{proof}

{\bf Achievability:} Consider the achievability proof of Theorem \ref{t_mac_one_crib} and just
replace $U_i$ by the pair $(U_i,A_{1,i})$. Note that the proof holds since at the end of block
$b$ the controller is able to decode $m'_{1,b}$.

{\bf Converse:} Consider the converse proof of Theorem \ref{t_mac_one_crib} and just replace
$U_i$. Note that $U_i\triangleq Z_1^{i-1}$. since $A_i$ is a function of $Z_1^{i-1}$ its also a
function of $U_i$ and by replacing $U_i$ by $U_i,A_{1,i}$ we obtain the converse proof
\end{proof}

\begin{example}[Deterministic Relay with actions]
Consider the case where only Encoder 1 has a message to transmit and Encoder 2 has no message of
its own to transmit, but helps to increase the rate of Encoder 1. Encoder 2, which plays the role
of a relay, takes an action $A_i$ that is a function of the observed signal up to time $i-1$,
i.e., $Z^{i-1}$. If $A_i=1$, then $Z_i=X_i$, and otherwise $Z_i$ is a constant. The cribbing
signal $Z_i$ is observed at Encoder 2 with a delay. There exists a constraint that
$\frac{1}{n}\sum_{i=1}^n E[A_i]\leq \Gamma$. In addition, Encoder 2 transmits  a signal $X_{2,i}$
through the channel at time $i$, where $X_{2,i}$ is a function of $Z^{i-1}$. The output channel
$Y$ is randomly chosen with equal probability to be either $X_1$ or $X_2$. This example is
illustrated in Fig. \ref{f_mac_crib_det_ex} and is a special case of the setting presented in
Fig. \ref{f_mac_crib_action}.

\begin{figure}[h!]{
\psfrag{B}[][][1]{Encoder 1} \psfrag{D}[][][1]{Encoder 2} \psfrag{D2}[][][1]{(Relay)\ \ \ }
\psfrag{m1}[][][1]{$m_1 \ \ \ \ $} \psfrag{m2}[][][1]{$\in\{1,...,2^{nR_1}\}\ \ \ \ \ \ \ $}
\psfrag{m3}[][][1]{$m_2 \ \ \ \ $}
\psfrag{m4}[][][1]{$\in\{1,...,2^{nR_2}\}\ \ \ \ \ \ \ $} \psfrag{P}[][][1]{$$}
\psfrag{x1}[][][1]{$\ \ \ \ \ \ \ \ \ \ \ \ X_{1,i}(m_1)$} \psfrag{x2}[][][1]{$X_{2,i}(Z^{i-1})$}
\psfrag{M}[][][1]{Channel}

 \psfrag{s}[][][1]{$S$} \psfrag{Yi}[][][1]{$Y$} \psfrag{W}[][][1]{Decoder}
\psfrag{t}[][][1]{$$} \psfrag{G1}[][][0.9]{$Z_{1,i}=g_1(X_{1,i})$}
\psfrag{G2}[][][0.9]{$Z_{2,i}=g_2(X_{2,i})$}

\psfrag{a}[][][1]{a} \psfrag{b}[][][1]{b} \psfrag{c12}[][][1]{$ m_{12}\in$}
\psfrag{c12b}[][][1]{$ \ \ \ \ \ \ \{1,...,2^{nC_{12}}\}$}

\psfrag{Y}[][][1]{$\ \ \ \ \ \ \hat m_1(Y^n)$} \psfrag{Y2}[][][1]{}
\psfrag{A0}[][][1]{$A=0$} \psfrag{A1}[][][1]{$A=1$} \psfrag{AZ}[][][1]{$A_i(Z^{i-1})$}
\psfrag{Z}[][][1]{$Z_i\ $} \psfrag{X1}[][][1]{$X_1$} \psfrag{X2}[][][1]{$X_2$}

\centerline{\includegraphics[width=14cm]{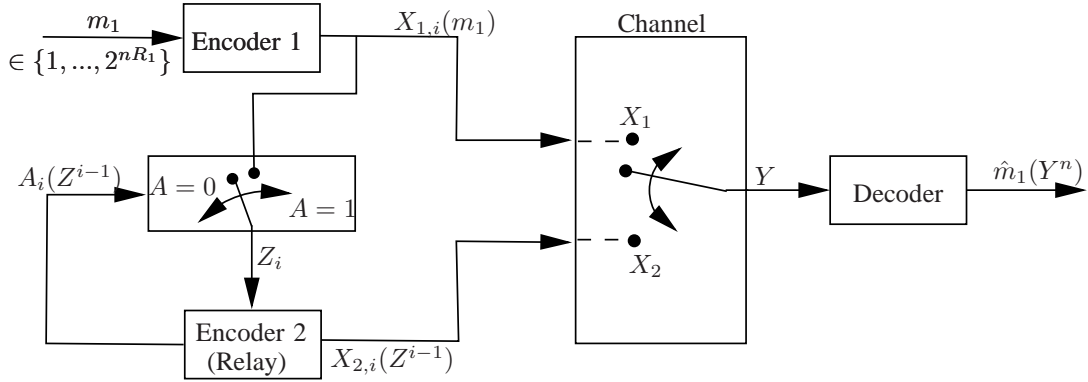}}


\caption{An example of deterministic cribbing with actions. The relay (Encoder 2) take an action
$A_i$ at time $i$ that depends on previous cribbing, i.s., $Z^{i-1}$. The cribbing signal $Z_i$
equals to $X_{1,i}$ if $A_i=1$ and is constant otherwise. The cribbing is a limited resource
hence there exists a constraint that on the portion of time that Encoder 2 can crib the signal
from Encoder 1, namely, $\frac{1}{n}\sum_{i=1}^n E[A_i]\leq \Gamma$. The output channel $Y$ is
randomly chosen with equal probability to be either $X_1$ or $X_2$}\label{f_mac_crib_det_ex}
 }\end{figure}

The next lemma establishes the capacity region of a deterministic relay with actions which is a
special case of the cribbing with actions.
\begin{lemma}
The capacity region of partial deterministic cribbing with actions where only Encoder 1 sends a
message, i.e., $R_2=0$ and there exists a delay in the cribbing (Case A) is
\begin{equation}\label{e_action1}
R_1=\max_{P_{X_1,X_2,A}: E[c(A)]\leq \Gamma}\min\{H(Z|X_2,A)+I(X_1;Y|X_2,Z_1,A), I(Y;X_1,X_2) \}.
\end{equation}
 If there is no delay in the cribbing (Case B), i.e., $X_{2,i}(Z^i)$, then
\begin{equation}
R_1=\max_{P_{U,X_1,A}P_{X_2|U,Z,A}: E[c(A)]\leq \Gamma}\min\{H(Z|U,A)+I(X_1;Y|X_2,Z,U,A),
I(Y;X_1,X_2) \}.
\end{equation}
\end{lemma}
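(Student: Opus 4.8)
The plan is to derive this lemma as a direct specialization of Theorem~\ref{t_mac_one_crib_act}. Set $R_2=0$, let $Z_2$ be a constant (Encoder 2 does not crib from Encoder 1's partner), write $Z\triangleq Z_1$ and $A\triangleq A_1$, and take the cost function $\Lambda_1=c$. Substituting these into the region $\mathcal R_A^a$ of \eqref{e_region2_case_action}, the constraint $R_2\leq I(X_2;Y|X_1,U,A_1)$ becomes vacuous, the sum-rate constraints collapse onto $R_1$, and we are left with
\begin{equation}
R_1\leq\min\bigl\{H(Z|U,A)+I(X_1;Y|X_2,Z,U,A),\ I(X_1,X_2;Y)\bigr\}
\end{equation}
maximized over $P(u,a)P(x_1,z|u,a)P(x_2|u,a)P(y|x_1,x_2)$ with $E[c(A)]\leq\Gamma$. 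For Case B the only change is that $X_2$ may depend causally on $Z$, giving the joint law $P(u,a)P(x_1,z|u,a)P(x_2|z,u,a)P(y|x_1,x_2)$, which is exactly the claimed second expression. So the content reduces to showing the auxiliary $U$ can be removed in Case A.

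First I would carry out the Case A simplification. The key observation, mirroring the argument already used in Section~\ref{s_semi_deterministic} for the semi-deterministic relay, is that under the factorization $P(u,a)P(x_1,z|u,a)P(x_2|u,a)P(y|x_1,x_2)$ one has the Markov chain $X_2-(U,A)-Z$, so $H(Z|U,A)=H(Z|U,A,X_2)$, and the Markov chain $Y-(X_1,X_2,Z,A)-U$ together with ``conditioning reduces entropy'' gives
\begin{equation}
H(Z|U,A)+I(X_1;Y|X_2,Z,U,A)\leq H(Z|X_2,A)+I(X_1;Y|X_2,Z,A).
\end{equation}
Now choose the new auxiliary $U'=X_2$ (keeping $A$): this attains the upper bound, does not decrease $I(X_1,X_2;Y)$ since that term has no $U$ in it, and preserves the cost constraint since $A$ is untouched. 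Hence in Case A it suffices to optimize over $P_{X_1,X_2,A}$ with $E[c(A)]\leq\Gamma$, yielding \eqref{e_action1}. In Case B the corresponding step fails because $X_2$ depends on $Z$, so the Markov chain $X_2-(U,A)-Z$ no longer holds and $U$ genuinely cannot be absorbed; one simply leaves the region in its auxiliary-variable form, which is what the lemma states.

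The remaining obstacle — really the only nontrivial point — is to verify that Theorem~\ref{t_mac_one_crib_act} actually applies to the degenerate configuration $R_2=0$, $Z_2=\text{const}$, and in particular that its achievability proof still goes through. I would note that the achievability argument for Theorem~\ref{t_mac_one_crib_act} (inherited from Theorem~\ref{t_mac_one_crib} via the substitution $U_i\mapsto(U_i,A_{1,i})$) only requires that the relay/Encoder~2 be able to decode $m'_{1,b}$ at the end of block $b$, which is precisely the content of the rate constraint $R_1-H(Z|U,A)\leq\cdots$; with $R_2=0$ there is no $m_2'$ to worry about, so the block-Markov superposition and backward-decoding scheme specializes cleanly, and the cardinality/measurability subtleties are handled exactly as in Lemma~\ref{lemma_cardinality}. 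I would also remark that the resulting region coincides with El-Gamal's semi-deterministic relay capacity when $\Gamma$ is large enough that $A\equiv1$ is feasible, providing a consistency check of both the statement and the reduction.
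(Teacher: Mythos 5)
Your proposal is correct and follows essentially the same route as the paper: specialize $\mathcal R_A^a$ to $R_2=0$, then eliminate $U$ in Case A via the Markov chains $X_2-(U,A)-Z$ and $Y-(X_1,X_2,Z,A)-U$ together with conditioning-reduces-entropy, and attain the bound by setting $U=X_2$, while leaving the auxiliary in place for Case B. The extra remarks on the degenerate achievability and the consistency check with the semi-deterministic relay are sound but not needed beyond what the paper already does.
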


\begin{proof}
Since $R_2=0$ follows from (\ref{e_region2_case_action}) that
\begin{eqnarray}\label{e_action_delay}
R_1&\leq&\max_{\mathcal P} \min\{ H(Z|U,A)+I(X_1;Y|X_2,Z,U,A), I(X_1,X_2;Y)\}.
\end{eqnarray}
For the case where there is a delay in the cribbing (case A) the set of joint distributions
$\mathcal P$ is of the form $P(u,a,x_1)P(x_2|u,a)P(y|x_1,x_2)$ and $Z$ is a function of $A$ and
$X$. Using mathematical manipulation on the first term in the minimum in (\ref{e_action_delay})
we obtain
\begin{eqnarray}\label{e_action_in}
R_1&\stackrel{(a)}{\leq}& H(Z|U,A,X_2)+I(X_1;Y|X_2,Z,U,A)\nonumber\\
&\stackrel{(b)}{\leq}& H(Z|A,X_2)+I(X_1;Y|X_2,Z,A),
\end{eqnarray}
where step (a) follows from the Markov chain $X_2-(U,A)-X_1-Z$ and step (b) from the fact that
conditioning reduces entropy and the Markov chain $Y-(X_1,X_2,Z_A)-U$. By choosing $U=X_2$ the
first term of (\ref{e_action_delay}) become the upper bound in (\ref{e_action_in}); hence
(\ref{e_action1}) is the capacity region.

In the case that there is no delay in the cribbing the capacity region is simply
(\ref{e_action_delay}) where the set of joint distribution $\mathcal P$ is of the form
$P(u,a,x_1)P(x_2|u,a,z)P(y|x_1,x_2)$ and $z$ is a deterministic function of $a$ and $x$.
\end{proof}

For the case of delay in the cribbing, the action $A_i$ can be seen as part of the output signal
from Encoder 2 to the channel, and indeed by replacing $X_2$ in
(\ref{e_semi_deterministic_with_delay}) with $(X_2,A)$, we obtain (\ref{e_action1}). However, in
the case of no delay in the cribbing i.e., $X_2(Z^{i})$, the replacement of $X_2$ is not possible
since the action must have a delay i.e., $A_i(Z^{i-1})$.

For obtaining a numerical solution when there is a delay in the cribbing, namely, evaluating
(\ref{e_action1}) for the example in Fig. \ref{f_mac_crib_det_ex} we can assume without loss of
optimality that
\begin{eqnarray}
\Pr(A=1)&=&\Gamma, \nonumber \\
\Pr(X_1=X_2|A=0)&=&\alpha_0, \nonumber \\
\Pr(X_1=X_2|A=1)&=&\alpha_1.
\end{eqnarray}
The reason one can assume that $\Pr(A=1)=\Gamma$ is because if this is not the case, and one has
a code where the portion of $\Pr(A=1)$ is smaller than $\Gamma,$ then one can add actions $A=1$
for
 some portion of time without decreasing the performance of the code. Furthermore, since the
 channel is symmetric  with respect to $0$ and $1$ (by exchanging 0 and 1 for the
 inputs to the channels the performance of the code remains the same)  only  the
 probability $\Pr(X_1=X_2)$ is important. Furthermore, from the same reasons  one can also assume that $P(x_1)$
and $P(x_2)$
 are Bernoulli$(\frac{1}{2})$ without loss of optimality. Now we shall compute the terms in (\ref{e_action1})
 \begin{eqnarray}
 I(Y;X_1,X_2)&=&H(Y)-H(Y|X_1,X_2)\nonumber \\
  &\stackrel{}{=}& 1- \Gamma+\alpha_1\Gamma-(1-\Gamma)(1-\alpha_0)\nonumber \\
   &=&\alpha_1\Gamma+\alpha_0(1-\Gamma),\\
\nonumber \\
H(Z|X_2,A)&=& \Gamma H_b(\alpha_1),\\
\nonumber \\
 I(X_1;Y|X_2,Z,A)&=&H(Y|X_2,Z,A)-H(Y|X_1,X_2,A)\nonumber \\
 &\stackrel{(a)}{=}&\Gamma(1-\alpha_1)+(1-\Gamma)H_b\left(\frac{1+\alpha_0}{2}\right)-\Gamma(1-\alpha_1)-(1-\Gamma)(1-\alpha_0)\nonumber \\
  &=& (1- \Gamma)\left(H_b\left(\frac{1+\alpha_0}{2}\right)+\alpha_0-1 \right),\label{e_compute_ex}
 \end{eqnarray}
where step (a) in (\ref{e_compute_ex}) is due to the fact that
$\Pr(Y=X_2|X_2,a=0)=\alpha_0+\frac{1-\alpha_0}{2}$ and therefore
$H(Y|X_2,Z,A)=\Gamma(1-\alpha_1)+(1-\Gamma)H_b(\frac{1+\alpha_0}{2})$ where $H_b(p)$ is the
binary entropy, i.e., $-p\log p -(1-p) \log (1-p)$ for $0\leq p\leq 1$. Hence the capacity of the
setting in Fig. \ref{f_mac_crib_det_ex} as a function on the constrain of the action $\Gamma$ is
\begin{equation}\label{C_Gamma}
C(\Gamma)=\max_{0\leq\alpha_0,\alpha_1\leq 1} \min(\Gamma H_b(\alpha_1)+(1-
\Gamma)\left(H_b\left(\frac{1+\alpha_0}{2}\right)+\alpha_0-1 \right),
\alpha_1\Gamma+\alpha_0(1-\Gamma)).
\end{equation}
\begin{figure}[h!]{
\psfrag{Z}[][][1]{$H_b(\frac{1}{5})-\frac{2}{5} \to  \ \ \ \ \ \ \ \ \ $} \psfrag{max}[][][1]{$\
\ \ \ \ \ \ \ \ \ \ \ \ \ \ \ \ \  \ \ \ \ \ \ \leftarrow \max_\alpha \min(\alpha,H_b(\alpha))$}
\psfrag{gamma}[][][1]{$\Gamma$} \psfrag{R}[][][1]{$C(\Gamma)$}
\centerline{\includegraphics[width=8cm]{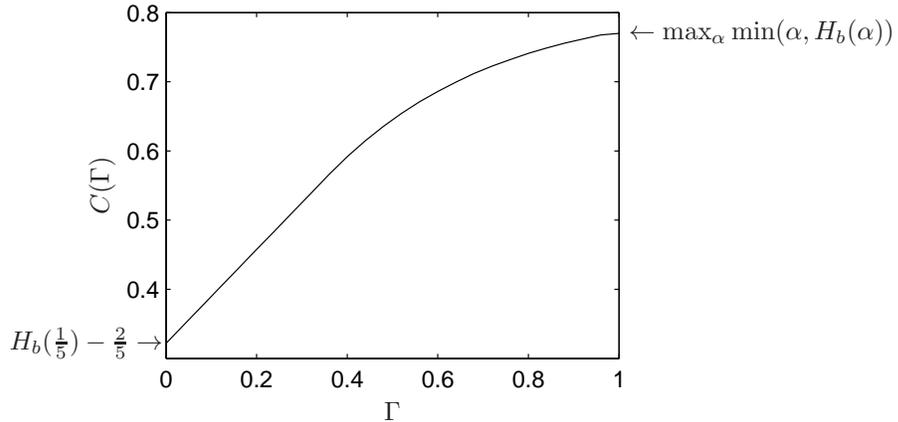}} 

\caption{Capacity of setting in  Fig. \ref{f_mac_crib_det_ex} as a function of the action
constraint $\Gamma$. For the case where $\Gamma=0$ the capacity can be solved analytically since
it is the capacity of the $Z$ channel. The capacity where  $\Gamma=1$ is the simple expression
$\max_{\alpha_1} \min({\alpha_1},H_b({\alpha_1}))$ which can be solved numerically by solving
${\alpha}=H_b({\alpha}).$}\label{f_ex_deter}
 }\end{figure}
The capacity $C(\Gamma)$ is depicted in Fig. \ref{f_ex_deter} and can be found simply using a
grid-search on $0\leq\alpha_0,\alpha_1\leq 1$ or by convex optimization tools. In the case that
$\Gamma=0$, $X_{2,i}$ is independent of the message $m_1$ and therefore we obtain that at any
time $i$ the channel from Encoder 1 to the output behaves as a $Z-$channel if $X_{2,i}=0$ and as
an $S$ channel if $X_{2,i}=1$ and the capacity of those two channels are
$H_b(\frac{1}{5})-\frac{2}{5}$, and therefore $C(0)=H_b(\frac{1}{5})-\frac{2}{5}.$ For the case
that $\Gamma=1$ we obtain from (\ref{C_Gamma}) that $C(1)=\max_{\alpha_1}
\min({\alpha_1},H_b({\alpha_1}))$. The $\alpha$ that maximizes the expression of $C(1)$ is the
one that solves the equation ${\alpha_1}=H_b({\alpha_1}).$
\end{example}

\section{Conclusions and further research directions\label{s_conclusion}}
We have considered the problem of MACs with partial cribbing encoders, namely, in a two encoder
MAC the observed cribbed signal is a deterministic function of the other encoder output. We have
characterized the capacity region for the two cases where the partial cribbing is causal or
strictly causal. Rate splitting  is the main additional technique used in the achievability proof
over the techniques used for perfect cribbing. The extension of perfect cribbing to partial
cribbing resemble to the extension of the decode-and-forward technique for the relay to the
partial-decode-and-forward technique
 \cite{ElGammalKim10LectureNotes}. The method we used for
partial cribbing may be also used for noisy cribbing, although in general the capacity region of
noisy cribbing is an open question. Another question that has not been solved yet is the non
causal partial cribbing. For the perfect cribbing case Willems \cite{Willems85_cribbing_encoders}
solved the noncausal case simply by showing that causal and non-causal perfect cribbing results
in the same capacity region.

Solving the partial cribbing setting allowed us to solve an action dependent cribbing problem. In
this paper we considered the case where the action is only a function of the previously observed
cribbing. However, the case in Fig. \ref{f_mac_crib_action} where the action is a function of the
previously observed cribbing and the message of the cribbing encoder, i.e.,
$a_{1,i}(z_1^{i-1},m_1)$ is yet to be solved. Issues of this nature may be raised in the sphere
of  cognitive communication systems where sensing other users' signals is a resource with a cost.

\section*{Acknowledgement}
The authors are grateful to Tsachy Weissman for very helpful discussions and for supporting the
work through NSF grants CCF-1049413 and 4101-38047, and to Yossef Steinberg for pointing out to
us the motivation of partial cribbing for the Gaussian case and for additional helpful
discussions. The work of Haim Permuter was partially supported by Marie-Curie fellowship and the
work of Himanshu Asnani  by The Scott A. and Geraldine D. Macomber Stanford Graduate Fellowship
Fund.

\bibliographystyle{unsrt}
\bibliographystyle{IEEEtran}

\end{document}